\documentclass[14]{article}
\usepackage{arxiv}
\usepackage{amsmath}

\usepackage[utf8]{inputenc} 
\usepackage[T1]{fontenc}    
\usepackage{hyperref}       
\usepackage{url}            
\usepackage{booktabs}       
\usepackage{amsfonts}       
\usepackage{nicefrac}       
\usepackage{microtype}      
\usepackage{lipsum}
\usepackage{cases}
 \usepackage{setspace}
\usepackage{geometry}
\usepackage{graphicx}
\usepackage{subcaption}%
\usepackage{caption}
\usepackage{tikz, pgfplots}
\usepackage{tikz}
\usetikzlibrary{3d}
\usetikzlibrary{hobby}
\usepackage{amssymb}
\usepackage{mathrsfs}
\usepackage{latexsym}
\usepackage{enumerate}
\usepackage{color}
\usepackage{ulem}
\usepackage{amsfonts}
\usepackage{verbatim}
\usepackage{amsthm}
\usepackage{dsfont}
\usepackage{url}
\usepackage{tocloft}
\usepackage{youngtab} 
 \usepackage{ytableau} 
\usepackage{enumerate}
\usepackage{mathtools}
\usepackage{multirow}
\usepackage{array}
\usepackage{xpatch}
\allowdisplaybreaks
\usepackage{cellspace} 
\setlength\cellspacetoplimit{8pt} 
\setlength\cellspacebottomlimit{6pt} 

\pgfplotsset{compat=1.18}

\xapptocmd{\proof}{\mbox{}\par\nobreak}{}{}
\newtheorem{theorem}{Theorem}[section]
\newtheorem{assumptions}[theorem]{Assumptions}

\newtheorem{lemma}[theorem]{Lemma}

\newtheorem{definition}[theorem]{Definition}
\newtheorem{prop}[theorem]{Proposition}

\numberwithin{equation}{section}

\newcolumntype{C}[1]{>{\centering\arraybackslash}p{#1}}

\title{Kernel Smoothing for Bounded Copula Densities}

\author{
   Mathias Muia\thanks{Department of Mathematics and Statistics, University of South Alabama, 411 University Boulevard North, Mobile, AL 36688. Email: \texttt{mnmuia@southalabama.edu}} \\
   \And
   Olivia Atutey\thanks{Department of Mathematics and Statistics, University of South Alabama, 411 University Boulevard North, Mobile, AL 36688. Email: \texttt{oatutey@southalabama.edu}}
   \\
   \And
   Mahmud Hasan\thanks{Department of Biostatistics, Virginia Commonwealth University, Richmond, VA, USA. Email: \texttt{hasanm10@vcu.edu}}
}

\setlength{\tabcolsep}{15pt}  
\renewcommand{\arraystretch}{1.5}  
\doublespacing
\begin{document}
\maketitle

\begin{abstract}
Nonparametric estimation of copula density functions using kernel estimators presents significant challenges. One issue is the potential unboundedness of certain copula density functions at the corners of the unit square. Another is the boundary bias inherent in kernel density estimation. This paper presents a kernel-based method for estimating bounded copula density functions, addressing boundary bias through the mirror-reflection technique. Optimal smoothing parameters are derived via Asymptotic Mean Integrated Squared Error (AMISE) minimization and cross-validation, with theoretical guarantees of consistency and asymptotic normality. Two kernel smoothing strategies are proposed: the rule-of-thumb approach and least squares cross-validation (LSCV). Simulation studies highlight the efficacy of the rule-of-thumb method in bandwidth selection for copulas with unbounded marginal supports. The methodology is further validated through an application to the Wisconsin Breast Cancer Diagnostic Dataset (WBCDD), where LSCV is used for bandwidth selection.

\textit{Keywords: Copula density, Kernel density estimation, LSCV, rule-of-thumb,  WBCDD}
\end{abstract}

\section{Introduction}
The study of dependence between random variables is a mainstay of statistical analysis. It is a critical, yet challenging task in multivariate statistical modeling, as it requires specifying complex joint distributions of random variables to fully capture their dependence structure. This complexity can be overcome by using copula models, which disentangle the marginal distributions from the dependence structure of the joint distribution. A multivariate distribution can be fully characterized by its marginal distributions and an associated copula, making copulas an indispensable tool for statistical investigations (Darsow \textit{et al}. (1992) \cite{darsow1992copulas}; Nelsen (2006) \cite{nelsen2006methods}). For further details on copulas, see Joe (1997) \cite{joe1997multivariate}, Nelsen (2006) \cite{nelsen2006methods}, and Durante and Sempi (2015) \cite{durante2015principles}. In what follows, we focus on the bivariate case only for simplicity, the results being extendable to more than two dimensions.

Sklar's theorem (Sklar, 1959 \cite{sklar1959}) is central to the theoretical foundation of copulas. It states that if \((X, Y)\) is a pair of random variables with a joint distribution function \(H\) and continuous marginal distribution functions \(F_1\) and \(F_2\), then there exists a unique copula \(C\) such that for all \(x,y, \ H(x, y) = C(F_1(x), F_2(y)).\) The copula \(C\) fully describes the dependence structure between \( X \) and \( Y \).

Formally, a function \(C: [0,1]^{2}\rightarrow [0,1]\) is called bivariate copula if it
satisfies the following conditions:\begin{itemize}
    \item[i.]  $C(0,u)=C(u,0)=0, \quad C(u,1)=C(1,u)=u, \quad \forall \ u\in [0,1];$
    \item[ii.] $C(u_{1},v_{1})+C(u_{2},v_{2})-C(u_{1},v_{2})-C(u_{2},v_{1})\geq 0, \quad \forall\ [u_{1},u_{2}]\times[v_{1},v_{2}]\subset [0,1]^{2}.$
\end{itemize}

If \( C \) is absolutely continuous (see Nelsen (2006) \cite{nelsen2006methods}), then it admits a joint density given by:
\[
    c(u,v) = \frac{\partial^2 C}{\partial u \partial v} = \frac{\partial^2 C}{\partial v \partial u}.
\]
In this case, such a copula is said to possess a density. The primary objective of this paper is to propose an estimator for \( c \), given an i.i.d. sample \(\{ (X_i,Y_i)\}_{i=1}^n\) from the distribution function \( H \), and analyze its properties.

Methods for estimating copulas and copula densities generally depend on the assumptions made about the joint distribution function \( H \). Commonly used approaches include parametric methods (Iyengar \textit{et al.}, 2011 \cite{iyengar2011}), semiparametric methods (Chen and Fan, 2006 \cite{chen2006}), and nonparametric methods (Chen and Huang, 2007 \cite{chen2007}; Wang \textit{et al.}, 2012 \cite{wang2012}). In fully parametric models, where both the copula and the marginal distributions are explicitly specified, maximum likelihood estimation is the method of choice. Semiparametric approaches, on the other hand, often specify a parametric copula while estimating the marginals nonparametrically, providing greater flexibility. 

Striking a balance between accuracy and computational efficiency is essential in copula estimation. Trivedi and Zimmer (2005) \cite{trivedi2005} and Choros \textit{et al.} (2010) \cite{choros2010} provided a discussion of these trade-offs. However, parametric methods can suffer significant underestimation when the marginal distributions are unknown or misspecified, as noted by Charpentier \textit{et al.} (2007) \cite{charpentier2007}. In such cases, non-parametric methods emerge as a more robust and flexible alternative. For example, Behnen \textit{ et al.} (1985) \cite{behnen1985rank} proposed using a rank-based estimator, which modifies kernel estimators for application to rank data, to estimate the copula density. Later, Gijbels and Mielniczuk (1990) \cite{gijbels1990} introduced a kernel-type estimator for bivariate copula densities and demonstrated its consistency and asymptotic normality under various conditions related to bandwidth selection and kernel smoothness.

These advancements underscore the effectiveness of nonparametric methods in capturing complex dependence structures without imposing restrictive assumptions on the copula or its marginals (Charpentier \textit{et al.} (2007) \cite{charpentier2007}; Chen and Huang, 2007 \cite{chen2007}). Building on these techniques, this paper employs a two-stage kernel density estimation procedure to estimate the copula density. In the first stage, the marginal distribution functions \( F_1(X) \) and \( F_2(Y) \) are estimated nonparametrically using empirical CDFs. These marginal estimates are then utilized in the second stage to construct a flexible, non-parametric estimate of the copula density. This approach provides a robust framework for modeling dependence structures, particularly when the underlying distribution functions are complex or unknown.


To formalize the methods, we introduce several notations that will be used throughout the paper. The squared \( L_2 \) norm of a function \( g \) is defined as
\begin{align}\label{l2norm}
    R(g) \equiv \|g\|_2^2 = \int_{-\infty}^{\infty} g(x)^2 \, dx,
\end{align}
where \( R(g) \) can be interpreted as a measure of the roughness of \( g \). The square of the \( p^{\text{th}} \) derivative of \( g \) is denoted by \( g^{(p)}(x)^2 \). Moments of the kernel \( K \) are represented by \( \mu_k \), where
\begin{align}\label{moments dfn}
    \mu_k = \int s^k K(s) \, ds.
\end{align}
A symmetric kernel function is characterized by the property \( k(u) = k(-u) \) for all \( u \). In this case, all odd moments are equal to zero. Most nonparametric estimation techniques utilize symmetric kernels, and our discussion will focus on them.
For a kernel \( K \), we assume that \( \mu_0 = 1 \), \( \mu_1 = \cdots = \mu_{p-1} = 0 \), and \( 0 < |\mu_p| < \infty \) for some even \( p \). These moment conditions play a pivotal role in determining the theoretical properties of kernel-based estimators. Unless otherwise stated, integrals are taken over the entire real line.

The remainder of the paper is organized as follows. Section~\ref{estimating the copula} introduces the proposed estimator. In Section~\ref{properties of the estimator}, we examine the properties of the estimator, including its bias and asymptotic behavior. Section~\ref{smoothing} discusses bandwidth selection procedures, detailing two main methods: the rule-of-thumb approach and least squares cross-validation (LSCV). Section~\ref{simulation} presents a simulation study using data generated from the Frank copula to validate our methodology. Finally, the results are applied to the Wisconsin Breast Cancer dataset in Section~\ref{real-life}.

\section{Estimating the Copula Density}\label{estimating the copula}

Let \(\{ (X_i,Y_i)\}_{i=1}^n\) be an i.i.d. sample from the joint distribution function \( H \). To estimate the copula density, we consider the transformed sample \(\{(\hat{F}_1(X_i), \hat{F}_2(Y_i))\}_{i=1}^n\), where \(\hat{F}_1\) and \(\hat{F}_2\) are the empirical distribution functions of the marginal distributions. Using empirical estimators for the margins has distinct advantages and trade-offs. On the one hand, this approach introduces zero bias into the copula density estimator and provides an exact representation of the data, unlike kernel-based margin estimators as discussed in Chen and Huang (2007) \cite{chen2007}. On the other hand, it results in higher variance, which may affect the overall accuracy of the estimator in the case of small samples.

Based on the transformed sample, the natural kernel-type estimator of $c(u, v)$ is (Wand and Jones (1995) \cite{wand1995kernel}; Silverman (2018) \cite{silverman2018density})  
\begin{align}\label{estimator1}
    \frac{1}{nh_n^2} \sum_{i=1}^n K \left( \frac{u - \hat{F}_1(X_i)}{h_n}, \frac{v - \hat{F}_2(Y_i)}{h_n} \right),
\end{align}
where \( K \) is a bivariate density function that is symmetric, unimodal at \((0,0)\), and has support on \([-1, 1]\). The parameter \( h_n \) represents a sequence of bandwidths that converges to zero as the sample size increases. Commonly used kernel functions include the uniform, Epanechnikov, biweight, triweight, and Gaussian kernels, each of which affects the smoothness and accuracy of the estimate differently. Detailed discussions on kernel density estimation can be found in Wand and Jones (1995) \cite{wand1995kernel}, Bowman \textit{et al.} (1998) \cite{bowman1998bandwidth}, and Silverman (2018) \cite{silverman2018density}.

However, the estimator (\ref{estimator1}) has a notable limitation. It is inconsistent at points on the boundary of the unit square in which $c$ has a jump, as noted by Gijbels and Mielniczuk (1990) \cite{gijbels1990}. This arises because, near the boundary, the summands in (\ref{estimator1}) allocate a significant amount of mass outside the unit square, leading to an issue known as boundary bias. Three methods have been proposed to address this problem, all of which were initially developed in the context of univariate kernel density estimation on the unit line. The first approach employs mirror reflection techniques, as discussed by Deheuvels and Hominal (1979) \cite{deheuvels1979}, Schuster (1985) \cite{schuster1985}, and Behnen \textit{et al.} (1985) \cite{behnen1985rank}. In two dimensions, this method reflects data points across all edges and corners of the unit square, generating an expanded dataset from which the kernel estimate is constructed. By redistributing the kernel’s mass within the unit square, this approach mitigates boundary bias. The second method uses a local linear variant of the kernel estimator, as introduced by Chen and Huang (2007) \cite{chen2007}, to address the bias near boundaries. The third method involves employing kernels whose support aligns with the support of the target density. These kernels adjust their shape based on the location where the density is being estimated. They are known as \textit{boundary kernels} and are exemplified by the \textit{beta kernel}, proposed by Charpentier \textit{et al.} (2007) \cite{charpentier2007}.

In this work, we employ the mirror reflection technique, which yields the following estimate of \( c(u,v) \). Denoting \(\hat{F}_1(X_i)\) and \(\hat{F}_2(Y_i)\) by \(\hat{U}_i\) and \(\hat{V}_i\), respectively, we define the estimator for \( c(u,v) \) as follows (Behnen \textit{et al.} (1985) \cite{behnen1985rank}):

\begin{definition}\label{definition1}
Let \( c(u,v) \) be a copula density. The \textbf{mirror-reflection kernel estimate of \( c \)} with smoothing parameter \( h_n > 0 \) is defined by
\begin{align}\label{modified estimate of c}
    \hat{c}(u,v) =\begin{cases}
        \frac{1}{nh_n^2} \sum\limits_{i=1}^n\sum\limits_{l=1}^9 K \left( \frac{u - \hat{U}_{il}}{h_n}, \frac{v - \hat{V}_{il}}{h_n} \right), & \text{for}\ 0 \leq u, v \leq 1,\\
        0, & \text{otherwise},
    \end{cases}
\end{align}
where \(\lbrace (\hat{U}_{il}, \hat{V}_{il}), i = 1, \dots, n; \, l = 1, \dots, 9 \rbrace = \lbrace (\pm \hat{U}_i, \pm \hat{V}_i), (\pm \hat{U}_i, 2 - \hat{V}_i), (2 - \hat{U}_i, \pm \hat{V}_i), (2 - \hat{U}_i, 2 - \hat{V}_i), i = 1, \dots, n \rbrace.\) 
\end{definition}

The following conditions are assumed in the analysis\begin{assumptions}\label{assumptions}
  \begin{itemize}
      \item[\textbf{$A_1$}:] $H$ has continuous marginal distribution functions and the copula density $c$ has bounded derivatives up to the fourth order on $[0,1]^2;$ 
      \item[\textbf{$A_2$}:]$K$ is a symmetric, bounded, sufficiently smooth continuous probability density supported on $[-1,1]$;
      \item[\textbf{$A_3$}:] The bandwidths satisfy $h_n\longrightarrow 0$ as $n\longrightarrow\infty$ and $ nh_n^2\longrightarrow\infty$ as $n\longrightarrow\infty.$ 
  \end{itemize}  
\end{assumptions}

Some widely applied copula densities, such as the Clayton, Gumbel, Gaussian, and Student’s $t$ copulas, as well as their first- and second-order partial derivatives, exhibit unboundedness at the boundaries of the unit square. This behavior introduces complexities in the analysis of these derivatives near the corners and edges of the unit square. To address this challenge in their work on copula estimation, Omelka \textit{\textit{et al.}} (2009) \cite{omelka2009improved} used the ``bandwidth shrinking'' method when approaching the borders of $[0,1]^2$. This approach involves substituting the bandwidth \( h_n \) with a bandwidth function \( r(w)h_n \), where \( w = u \) or \( v \), effectively ``shrinking'' the bandwidth towards zero at the corners of the unit square.

In this work, we focus on the estimation of copula densities that are bounded and possess bounded second-order partial derivatives, exemplified by families such as Ali-Mikhail-Haq, Frank, Plackett, Farlie-Gumbel-Morgenstern, among others.

\section{Properties of the estimator}\label{properties of the estimator}

In this section, we analyze the properties of the estimator defined in \eqref{modified estimate of c}. For this purpose, we utilize the multiplicative kernel, which is specified as follows:  
\begin{align}\label{mult kernel}
    \hat{c}(u, v) = \frac{1}{n} \sum_{i=1}^n \sum_{l=1}^9 K_{h_n} \left( u - \hat{U}_{il} \right) K_{h_n} \left( v - \hat{V}_{il} \right),
\end{align}
where \( K_{h_n}(x) = 1/h_n K(x/h_n)\). This kernel function will be employed consistently throughout the remainder of the paper.

\subsection{Bias and variance}
The following proposition establishes the expressions for the estimator's bias and variance.

\begin{prop}\label{prop:bias_variance}
Let \( c(u,v) \) be a copula density function that is twice-continuously differentiable on \( [0,1]^2 \). Assume that the smoothing parameter \( h_n \to 0 \) and \( nh_n^2 \to \infty \) as \( n \to \infty \). Then, for all points \( (u,v) \in [0,1]^2 \), the bias and variance of the estimator \( \hat{c}(u,v) \), defined in (\ref{mult kernel}), are given by:
\begin{align*}
    &\operatorname{Bias}[\hat{c}(u,v)] = \frac{\mu_2(K) h_n^2}{2} \left[ c_{uu}(u,v) + c_{vv}(u,v) \right] + o(h_n^2), \\
    &\operatorname{Var}[\hat{c}(u,v)] = \frac{R(K)^2}{nh_n^2} c(u,v) + o\left(\frac{1}{nh_n^2}\right),
\end{align*}
where \( R(K) \) and \( \mu_2(K) \) are as defined in \eqref{l2norm} and \eqref{moments dfn} respectively. Here, \( c_{uu} \) and \( c_{vv} \) denote the second-order partial derivatives of \( c \) with respect to \( u \) and \( v \), respectively.
\end{prop}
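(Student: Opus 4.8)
The plan is to reduce everything to an \emph{oracle} estimator built from the true (but unobserved) uniform scores, carry out a standard kernel expansion there, and only afterwards account for the estimation of the margins. Set $U_i = F_1(X_i)$ and $V_i = F_2(Y_i)$; by Sklar's theorem these are i.i.d.\ with joint density $c$ on $[0,1]^2$. Let $\tilde c$ denote the version of $\hat c$ obtained by replacing each $\hat U_i,\hat V_i$ (and their nine reflections) with $U_i,V_i$. I would first record that, for each $i$, the nine reflected points are exactly the product grid $\{U_i,-U_i,2-U_i\}\times\{V_i,-V_i,2-V_i\}$, so that each summand factorizes,
\[
\sum_{l=1}^{9} K_{h_n}(u-U_{il})\,K_{h_n}(v-V_{il}) \;=\; \Phi_u(U_i)\,\Psi_v(V_i),
\]
with $\Phi_u(s)=K_{h_n}(u-s)+K_{h_n}(u+s)+K_{h_n}(u-2+s)$ and $\Psi_v$ defined analogously. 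This factorization organizes the nine terms and isolates the one-dimensional boundary corrections.

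For the bias of $\tilde c$, identical distribution of the $n$ terms gives $E[\tilde c(u,v)]=\int_0^1\!\int_0^1 \Phi_u(s)\Psi_v(t)\,c(s,t)\,ds\,dt$. At points farther than $h_n$ from the boundary $\partial[0,1]^2$ the reflected kernels vanish on $[0,1]$ (since $K$ is supported on $[-1,1]$), so only the direct term survives; substituting $s=u-h_nw_1$, $t=v-h_nw_2$ and Taylor-expanding $c$ to second order yields $c(u,v)+\tfrac{h_n^2}{2}\mu_2\,[c_{uu}+c_{vv}]+o(h_n^2)$, the first-order and mixed $c_{uv}$ terms dropping out because $\mu_1=0$ for the symmetric kernel. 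Subtracting $c(u,v)$ gives the stated bias. Near the boundary the reflected terms are active, and I would show that the truncated integrals they produce combine with the truncated direct term to keep the leading bias at order $h_n^2$; this is precisely the effect the mirror reflection is designed to achieve, repairing the inconsistency of the naive estimator~(\ref{estimator1}).

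For the variance I would use $\operatorname{Var}[\tilde c(u,v)]=\tfrac1n\operatorname{Var}[\Phi_u(U_1)\Psi_v(V_1)]$. The squared mean is $O(n^{-1})=o(n^{-1}h_n^{-2})$ and hence negligible, leaving $\tfrac1n E[\Phi_u(U_1)^2\Psi_v(V_1)^2]$. In the interior this equals $\tfrac1n\int\!\int K_{h_n}(u-s)^2 K_{h_n}(v-t)^2\, c(s,t)\,ds\,dt$; the same change of variables extracts a factor $h_n^{-2}$ and leaves $\dfrac{R(K)^2}{nh_n^2}\,c(u,v)+o(n^{-1}h_n^{-2})$, the square on $R(K)$ arising from the product structure of the kernel.

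The main obstacle is the passage from $\hat c$ to $\tilde c$, i.e.\ controlling the effect of estimating the margins by the empirical CDFs. Because $\hat F_j$ depends only on ranks, one may assume the sample is already uniform, whence $\hat U_i=\hat G(U_i)$ with $\hat G$ the empirical CDF of the $U_i$, and $\sup_t|\hat G(t)-t|=O_p(n^{-1/2})$ by the DKW inequality (similarly for $V$). A first-order expansion of each summand in $\hat U_{il}-U_{il}$ and $\hat V_{il}-V_{il}$ shows the correction is governed by the smoothed empirical processes $\sqrt n(\hat F_j-F_j)$. The delicate point is that the crude pointwise bound $O_p(n^{-1/2}h_n^{-3})$ is far too weak under $A_3$; the argument must exploit the cancellation and averaging in these smoothed processes — via the uniform weak convergence of the empirical copula process — to conclude that $\hat c-\tilde c$ is $o_p(h_n^2)$ in bias and $o_p(n^{-1}h_n^{-2})$ in variance. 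This, together with the boundary bookkeeping above, is where the real work lies; the interior expansions themselves are routine.
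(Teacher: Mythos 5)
Your interior computations are correct and essentially the paper's own: the change of variables $s=(u-x)/h_n$, $t=(v-y)/h_n$ followed by a second-order Taylor expansion gives the stated bias on $[h_n,1-h_n]^2$, and your $n^{-1}E\left[\Phi_u(U_1)^2\Psi_v(V_1)^2\right]$ computation gives the stated variance there (the paper does not even carry this step out; it cites Theorem 3.2 of Gijbels and Mielniczuk (1990)). Your observation that the nine reflections form the product grid $\{U_i,-U_i,2-U_i\}\times\{V_i,-V_i,2-V_i\}$, so that each summand factorizes, is also a tidy organizational device not in the paper. The problem is that the two steps separating this outline from a proof are both deferred rather than done. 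The first is the passage from the oracle $\tilde c$ to $\hat c$: you correctly note that the summands of $\hat c$ are rank-based and hence not independent (so even your variance identity $\operatorname{Var}[\tilde c]=n^{-1}\operatorname{Var}[\Phi_u\Psi_v]$ holds only for the oracle), that the crude bound $O_p(n^{-1/2}h_n^{-3})$ is useless under $A_3$, and that one needs cancellation in the empirical copula process to show $\hat c-\tilde c$ perturbs the bias by $o_p(h_n^2)$ and the variance by $o_p(n^{-1}h_n^{-2})$ --- and then you stop, calling this ``where the real work lies.'' That is the hardest part of the statement, and announcing it is not proving it. For what it is worth, the paper never confronts it either: from (\ref{biasproof}) onward it evaluates $E\left[K_{h_n}(u-\hat U_1)K_{h_n}(v-\hat V_1)\right]$ as $\int\int K_{h_n}(u-x)K_{h_n}(v-y)c(x,y)\,dx\,dy$, i.e.\ it treats the pseudo-observations as true uniforms, so in effect it proves the proposition only for your $\tilde c$. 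You have located a genuine gap in the paper, but your proposal inherits it rather than closing it.

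The second deferred step is the boundary strip. You assert that near $\partial[0,1]^2$ the reflected terms ``keep the leading bias at order $h_n^2$,'' with no computation, and this assertion is precisely the delicate one: for mirror reflection the first-order terms of the direct and reflected kernels do not cancel in general. In one dimension, reflection at $x=1$ leaves a bias term proportional to $h\,f'(1)\int_0^1 tK(t)\,dt$ at the boundary point, i.e.\ of order $h_n$ unless the normal derivative vanishes there --- a shoulder condition that Frank or Farlie--Gumbel--Morgenstern copulas do not satisfy at the corners. The paper at least writes out the corner case $(u,v)\in[1-h_n,1]^2$ through the four active reflections (\ref{1})--(\ref{4}), but even it concludes only $\operatorname{Bias}=O(h_n^2)$ there, not the displayed leading-constant formula, and its bookkeeping discards exactly the $h_nc_u(1,1)\int sK(s)\,ds$-type terms that generically survive. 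So this part cannot be waved through: either the claim must be restricted to $(u,v)\in[h_n,1-h_n]^2$ --- which is all your interior argument (and, candidly, the paper's) delivers with the stated leading term --- or you must supply the explicit boundary cancellation computation together with the additional hypotheses it would require.
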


\begin{proof}\begin{itemize}
    \item[i.] The bias is given by  \begin{align}\label{biasproof}
    Bias[\hat{c}(u,v)]=\sum\limits_{l=1}^9 E\left[K_{h_n}(u-\hat{U}_{il})K_{h_n}(v-\hat{V}_{il})\right]-c(u,v)
\end{align}   
To show that this bias is uniformly $O(h_n^2)$, we utilize the Taylor expansion of \( c(u - h_n s, v - h_n t) \). This expansion is justified by the assumption that \( c(u, v) \) has bounded first and second-order derivatives on \( [0,1]^2 \). 

The simplest case occurs when \( (u, v) \in [h_n, 1 - h_n]^2 \). In this case, Equation (\ref{biasproof}) reduces to
\begin{align*}
Bias[\hat{c}(u,v)]&=  E\left[K_{h_n}(u-\hat{U}_1)K_{h_n}(v-\hat{V}_1)\right]-c(u,v).  
\end{align*}
Substituting $s=(u-x)/h_n$ and $t=(v-y)/h_n$ and using Assumptions \ref{assumptions}, we obtain
\begin{align*}
Bias[\hat{c}(u,v)]&=  \int_{-1}^1\int_{-1}^1 c(u-h_ns,v-h_nt)K(s)K(t)dsdt -c(u,v)  \\
&=\int_{-1}^1\int_{-1}^1\lbrace c(u,v)-h_nsc_u(u,v)-h_ntc_v(u,v)+\frac{h_n^2s^2}{2}c_{uu}(u,v) + \\ &\qquad h_n^2stc_{uv}(u,v) +\frac{h_n^2t^2}{2}c_{vv}(u,v)\rbrace K(s)K(t)dsdt+o(h_n^2)-c(u,v)\\
&=\frac{\mu_2(K)h_n^2}{2}\left[ c_{uu}(u,v)+c_{vv}(u,v) \right] +o(h_n^2)
\end{align*}
where $\mu_2(K)$ denotes the second moment of $K$, and $c_u$, $c_v$, $c_{uu}$, $c_{uv}$, and $c_{vv}$ are the first- and second-order partial derivatives with respect to $u$ and $v$. 

Regarding the remaining cases, we will show only for the case when $(u,v)\in [1-h_n,1]^2$. The other cases may be handled in a similar way. Note that the Taylor expansion together with Assumptions \ref{assumptions} imply that  $c(u,v)=c(1,1)+(u-sh_n)c_u(1,1)+(v-th_n)c_v(1,1)+O(h_n^2)$ uniformly in $(u,v)\in[1-2h_n,1]^2.$ Further routine algebra shows that (\ref{biasproof}) simplifies to\begin{align*}
    Bias[\hat{c}(u,v)]&=EK_{h_n}(u-U_1)K_{h_n}(v-V_1)\\
    &\qquad +EK_{h_n}(u-U_1)K_{h_n}(v+V_1-2)\\
    &\qquad +EK_{h_n}(u+U_1-2)K_{h_n}(v-V_1)\\
    &\qquad +EK_{h_n}(u+U_1-2)K_{h_n}(v+V_1-2)-c(u,v)
\end{align*}
Let $G(z)=\int_{-\infty}^zK(t)dt$ and $T_{w,h_n}=G((w-1)/h)$ for $w=u\ \text{or}\ v$. We compute \begin{align}\label{1}
    E&K_{h_n}(u-U_1)K_{h_n}(v-V_1)\nonumber\\
    &\qquad=\int\int K_{h_n}(u-x)K_{h_n}(v-y)c(x,y)dxdy\nonumber\\
    &\qquad=\int_{\frac{v-1}{h_n}}^1\int_{\frac{u-1}{h_n}}^1 c(u-sh_n,v-th_n)K(s)K(t)dsdt\nonumber\\
    &\qquad=\int_{\frac{v-1}{h_n}}^1\int_{\frac{u-1}{h_n}}^1\lbrace c(1,1)+uc_u(1,1)-sh_nc_u(1,1)-c_u(1,1)+vc_v(1,1)-th_nc_v(1,1)-c_v(1,1)\rbrace\nonumber\\ &\qquad\qquad K(s)K(t)dsdt + O(h_n^2)\nonumber\\
    &\qquad=c(1,1)\left[1-T_{v,h_n}\right]\left[1-T_{u,h_n}\right]+uc_u(1,1)\left[1-T_{v,h_n}\right]\left[1-T_{u,h_n}\right]-c_u(1,1)\left[1-T_{v,h_n}\right]\times\nonumber\\&\qquad\qquad\left[1-T_{u,h_n}\right]-  h_nc_u(1,1)\left[1-T_{v,h_n}\right]\int_{\frac{u-1}{h_n}}^1sK(s)ds    +vc_v(1,1)\left[1-T_{v,h_n}\right]\left[1-T_{u,h_n}\right]-\nonumber\\&\qquad\qquad c_v(1,1)\left[1-T_{v,h_n}\right]\left[1-T_{u,h_n}\right]-h_nc_v(1,1)\left[1-T_{u,h_n}\right]\int_{\frac{v-1}{h_n}}^1tK(t)dt+O(h_n^2).
\end{align}
Similarly,
\begin{align}\label{2}
    E&K_{h_n}(u-U_1)K_{h_n}(v+V_1-2) 
   =\int\int K_{h_n}(u-x)K_{h_n}(v+y-2)c(x,y)dxdy\nonumber\\
    &\qquad= \int_{-1}^{\frac{v-1}{h_n}}\int_{\frac{u-1}{h_n}}^1 c(u-sh_n,2+th_n-v)K(s)K(t)dsdt\nonumber\\
    &\qquad=c(1,1)[1-T_{u,h_n}]T_{v,h_n}+uc_u(1,1)[1-T_{u,h_n}]T_{v,h_n}-h_nc_u(1,1)\int_{\frac{u-1}{h_n}}^1sK(s)dsT_{v,h_n}-\nonumber\\&\qquad\qquad c_u(1,1)[1-T_{u,h_n}]T_{v,h_n}+c_v(1,1)[1-T_{u,h_n}]T_{v,h_n}+h_nc_v(1,1)\int_{-1}^{\frac{v-1}{h_n}}tK(t)dt[1-T_{u,h_n}]-\nonumber\\&\qquad\qquad vc_v(1,1)[1-T_{u,h_n}]T_{v,h_n}+O(h_n^2).
\end{align}
Taking a similar approach as above gives
\begin{align}\label{3}
    E&K_{h_n}(u+U_1-2)K_{h_n}(v-V_1)=\int\int K_{h_n}(u+x-2)K_{h_n}(v-y) c(x,y)dxdy\nonumber\\
    &\qquad=c(1,1)[1-T_{v,h_n}]T_{u,h_n}- uc_u(1,1)[1-T_{v,h_n}]T_{u,h_n}+h_nc_u(1,1)\int_{-1}^{\frac{u-1}{h_n}}sK(s)ds[1-T_{v,h_n}]+\nonumber\\&\qquad\qquad c_u(1,1)[1-T_{v,h_n}]T_{u,h_n}-c_v(1,1)[1-T_{v,h_n}]T_{u,h_n}-h_nc_v(1,1)\int_{\frac{v-1}{h_n}}^1tK(t)dtT_{u,h_n}+ \nonumber\\&\qquad\qquad vc_v(1,1)[1-T_{v,h_n}]T_{u,h_n}+O(h_n^2).
\end{align}
and 
\begin{align}\label{4}
    E&K_{h_n}(u+U_1-2)K_{h_n}(v+V_1-2) 
   =\int\int K_{h_n}(u+x-2)K_{h_n}(v+y-2)c(x,y)dxdy\nonumber\\
    &=c(1,1)T_{v,h_n}T_{u,h_n}+c_u(1,1)T_{v,h_n}T_{u,h_n}+h_nc_u(1,1)\int_{-1}^{\frac{u-1}{h_n}}sK(s)dsT_{v,h_n}-uc_u(1,1)T_{v,h_n}T_{u,h_n}+\nonumber\\ &\qquad\qquad c_v(1,1)T_{v,h_n}T_{u,h_n}+h_nc_v(1,1)+\int_{-1}^{\frac{v-1}{h_n}}tK(t)dtT_{u,h_n}-vc_v(1,1)T_{v,h_n}T_{u,h_n}+O(h_n^2)
\end{align}

Combining (\ref{1}) with (\ref{2}), (\ref{3}) and (\ref{4}) gives us
\begin{align*}
    Bias[\hat{c}(u,v)]=c(1,1)+(u-sh_n)c_u(1,1)+(v-th_n)c_v(1,1)+O(h_n^2)-c(u,v)=O(h_n^2),
\end{align*}
which was to be proved.

\item[ii.] The prove for the variance relies on results on the asymptotic normality for the mirror-reflection estimator by  Gijbels and Mielniczuk (1990) \cite{gijbels1990}, Theorem 3.2. 
\end{itemize}
\end{proof}

\subsection{Consistency and asymptotic normality of the estimator}
Behnen \textit{et al.} (1985) \cite{behnen1985rank}, in Theorem 2.1, showed that the estimator (\ref{modified estimate of c}) is a uniformly strongly consistent estimator of $c(u,v)$. The asymptotic normality of the estimator is established in Theorem 3.2 of Gijbels and Mielniczuk (1990) \cite{gijbels1990}.

\section{Bandwidth selection}\label{smoothing}
The estimation process is highly sensitive to the choice of smoothing parameters, \( h_n \). In this section, we discuss three main methods for selecting bandwidth. Determining the ``optimal'' values for these parameters requires a specific criterion function.

The first method identifies the optimal bandwidth by minimizing the AMISE. The other two methods involve cross-validation (CV) approaches aimed at minimizing the MISE, following the works of Bowman (1984) \cite{bowman1984alternative}, Scott and Terrell (1987) \cite{scott1987biased}, and Terrell and Scott (1985) \cite{terrell1985oversmoothed}. In the context of multivariate kernel estimation, these data-driven CV methods seek to estimate the Integrated Squared Error (ISE) or MISE directly from the data, identifying the smoothing parameter that minimizes the estimated ISE or MISE.

Stone (1984) \cite{stone1984asymptotically} demonstrated a significant theoretical finding that, when the underlying multivariate density and its one-dimensional marginals are bounded, the smoothing parameters selected using multivariate least-squares cross-validation (LSCV) are asymptotically optimal. For more information on cross-validation methods, refer to Rudemo (1982) \cite{rudemo1982empirical}, Bowman (1984) \cite{bowman1984alternative}, Scott and Terrell (1987) \cite{scott1987biased}, and Terrell and Scott (1985) \cite{terrell1985oversmoothed}. Higher-order plug-in algorithms for multivariate density estimation have been explored by Wand and Jones (1995) \cite{wand1995kernel}.

\subsection{Rule-of-thumb approach}
A common method for selecting bandwidth involves establishing a rule-of-thumb based on a specific parametric family. Given that our interest lies in estimating density over the entire unit square, our focus is on choosing a global bandwidth. Consequently, we adopt the strategy of minimizing the asymptotic mean integrated squared error (AMISE).

The mean square error (MSE) for $(u,v)\in[h_n,1-h_n]^2$ is \begin{align}\label{MSE}
    MSE(\hat{c}(u,v))&=Var(\hat{c}(u,v))+Bias(\hat{c}(u,v))^2\nonumber\\
    &=\frac{R(K)^2}{nh_n^2}c(u,v)+[\frac{\mu_2(K)h_n^2}{2}[c_{uu}(u,v)+c_{vv}(u,v)]]^2+o((nh_n^2)^{-1}+h^4)
\end{align}
The MISE is obtained by integrating (\ref{MSE}) above using the integrability assumption on $c(u,v)$.
\begin{align}\label{MISE}
   MI&SE(\hat{c}(u,v))\nonumber\\= &(nh_n^2)^{-1}R(K)^2\int_0^1\int_0^1c(u,v)\ du\ dv+\frac{h_n^4\mu_2(K)^2}{4}\int_0^1\int_0^1[c_{uu}(u,v)+c_{vv}(u,v)]^2\ du\ dv+o((nh_n^2)^{-1}+h^4)\nonumber\\ =&
   (nh_n^2)^{-1}R(K)^2+\frac{h_n^4\mu_2(K)^2}{4}\beta+o((nh_n^2)^{-1}+h^4),
\end{align}
where \(\beta=\int_0^1\int_0^1[c_{uu}(u,v)+c_{vv}(u,v)]^2 \, du \, dv\). This leads to the asymptotic MISE
\begin{align}\label{AMISE}
    \text{AMISE}(h_n) = (nh_n^2)^{-1}R(K)^2 + \frac{h_n^4 \mu_2(K)^2}{4} \beta,
\end{align}
with \(\beta\) as defined above. The optimal bandwidth that minimizes the MISE is therefore given by
\begin{align}\label{optimal_mise}
    h_n^* = \left[ \frac{2 R(K)^2}{n \mu_2(K)^2 \beta} \right]^{1/6}.
\end{align}

This expression still relies on the unknown copula density $c(u,v)$. In practice, we select a parametric reference copula family and adjust the parameter to match the dependence strength observed in the data, for instance, by inverting Kendall's tau to estimate the unknown parameter in Frank's copula density. The optimal bandwidth $h_n^*$ is then determined numerically. It is important to note that this approach is feasible only if $R(K)^2$ and $\beta$ are finite.

\subsection{Cross-validation (CV)}
We now shift our focus to an alternative approach for estimating bandwidth. Notably, \( h_n^* \), defined in (\ref{optimal_mise}), is obtained by minimizing the AMISE, where estimates replace the unknown curvature components of \( c \). Alternatively, we may directly minimize the MISE using cross-validation approaches, as recommended by Scott and Terrell (1987) \cite{scott1987biased}. The CV approach involves using the sample twice: once to compute the KDE and again to assess its accuracy in estimating \( c(u, v) \). To avoid dependence on the same data for both computation and evaluation, the CV approach partitions the sample in a cross-validatory manner, ensuring that the data used to compute the KDE is excluded from its evaluation.

\subsubsection{Least squares cross-validation (LSCV)}

Least squares cross-validation (LSCV) is a statistical technique for evaluating a model’s predictive accuracy and optimizing model parameters by dividing data into training and validation subsets. This method is particularly useful in selecting parameters that prevent overfitting, ensuring the model generalizes well to new data. Rudemo (1982) \cite{rudemo1982empirical} discussed LSCV for selecting smoothing parameters in density estimation, while Bowman (1984) \cite{bowman1984alternative} introduced foundational approaches to LSCV for bandwidth selection. Wand and Jones (1995) \cite{wand1995kernel} provided an overview of LSCV for kernel smoothing, covering both theoretical and practical applications.

\textit{\bf Second-stage smoothing}\\
In the second stage of the estimation process, we employ the estimated marginal distributions to estimate the copula density \( c(u, v) \), where \( u = F(x) \) and \( v = G(y) \). Here, the Epanechnikov kernel, $K(x)=3/4(1-x^2)_+$, where the subscript ``$+$'' denotes the positive part. Our goodness-of-fit criterion between \( c(u, v) \) and \( \hat{c}(u, v) \) is the usual ISE, defined as follows:
\begin{align}\label{ISE}
    \text{ISE} = \int_0^1 \int_0^1 \left[ \hat{c}(u, v) - c(u, v) \right]^2 \, \ du\ dv.
\end{align}
The ISE measures the discrepancy between the true copula density and the estimated density over the unit square. By minimizing the expected value of the ISE, we obtain the MISE, which represents the average error in our estimation over repeated samples (see Rudemo (1982) \cite{rudemo1982empirical} and Bowman (1984) \cite{bowman1984alternative}). The MISE serves as a criterion for selecting an optimal bandwidth by minimizing the estimation error. Replacing \(\hat{c}(u,v)\) with the generalized estimator \(\hat{c}(u,v;h_n)\) in (\ref{ISE}) and expanding yields
\begin{align*}
    ISE(\hat{c}(\cdot;h_n)) = R(\hat{c}(\cdot;h_n)) - 2 \int_0^1 \int_0^1 \hat{c}(u,v;h_n)c(u,v) \,  du\ dv + \int_0^1 \int_0^1 c(u,v)^2 \,  du\ dv,
\end{align*}
where
\[
\hat{c}(u,v;h_n) = \frac{1}{n} \sum_{i=1}^n \sum_{l=1}^9 K_{h_n} \left( u - \hat{U}_{il} \right) K_{h_n} \left( v - \hat{V}_{il} \right)
\]
is the multiplicative bivariate kernel estimator. Here, \( \hat{U}_{il}  \) and \( \hat{V}_{il}  \) are as given in Definition \ref{definition1}, and \( R(g) \) is as defined in (\ref{l2norm}). The MISE is then given by
    \begin{align*}
  MISE(\hat{c}(\cdot;h_n)) =E\left[R(\hat{c}(\cdot;h_n))\right]-2E\left[\int_0^1\int_0^1\hat{c}(u,v;h_n)c(u,v)\ du\ dv\right]+E\left[\int_0^1\int_0^1 c(u,v)^2\ du\ dv\right].
\end{align*}
Since the last term is independent of \( h_n \), minimizing \( MISE(\hat{c}(\cdot; h_n)) \) is equivalent to minimizing 
\begin{align}\label{minimised}
 E\left[R(\hat{c}(\cdot;h_n))\right]-2E\left[\int_0^1\int_0^1 \hat{c}(u,v;h_n)c(u,v)\ du\ dv\right].   
\end{align}
This quantity is unknown but can be estimated unbiasedly as (Bowman (1984) \cite{bowman1984alternative} and Wand and Jones (1995) \cite{wand1995kernel}):
\begin{align}\label{LSCV}
    \text{LSCV}(h_n) := R(\hat{c}(\cdot; h_n)) - \frac{2}{n} \sum_{i=1}^n \sum_{l=1}^9 \hat{c}_{-i}(U_{il}, V_{il}; h_n),
\end{align}
where \begin{align*}
    \hat{c}_{-i}(u, v; h_n) = \frac{1}{n-1} \sum_{\substack{j=1 \\ j \neq i}}^n \sum_{l=1}^9 K_{h_n}(u - U_{jl}) K_{h_n}(v - V_{jl})
\end{align*} is the leave-one-out KDE, computed by excluding \( (U_{il}, V_{il}) \) from the sample.

The first term in (\ref{LSCV}) is unbiased by design. The second term results from estimating $\int\int \hat{c}(u,v)c(u,v) \, du\ dv$ using a Monte Carlo approximation based on the sample $\lbrace U_{il}, V_{il} \mid i = 1, \ldots, n; \, l = 1, \ldots, 9 \rbrace$, whose copula has density $c(u,v)$; specifically, this is achieved by replacing $c(u,v) \, \ du\ dv = dC(u,v)$ with $dC_n(u,v)$, where $C_n$ denotes the empirical copula.
This can be expressed as 
\begin{align*}
    \int\int \hat{c}(u,v;h_n) \, du \, dv \approx \frac{1}{n} \sum\limits_{i=1}^n \hat{c}(U_{il}, V_{il}; h_n),
\end{align*}
where, to reduce sample dependence, we substitute $\hat{c}(U_{il}, V_{il}; h_n)$ with $\hat{c}_{-i}(U_{il}, V_{il}; h_n)$. In this manner, we utilize the sample to estimate the integral involving $\hat{c}(\cdot; h_n)$, while for each $(U_{il}, V_{il})$, the kernel density estimate is computed from the remaining observations.

To demonstrate that (\ref{LSCV}) serves as an unbiased estimator of (\ref{minimised}), observe that the expectations of (\ref{minimised}) and $LSCV(h_n)$ correspond term by term, since \begin{align*}
    E \hat{c}_{-i}(u_i,v_i;h_n)&=E K_{h_{n}}(u_i-U_j)K_{h_{n}}(v_i-V_j)=E\int_0^1\int_0^1 K_{h_{n}}(u-U_j)K_{h_{n}}(v-V_j)c(u,v)\ du\ dv\\ &=E\hat{c}(u,v;h_n)\ du\ dv.
\end{align*}

We refer to (\ref{LSCV}) as an unbiased cross-validation criterion because its expectation satisfies
\begin{align*}
   E(\text{LSCV}(h_n)) = \text{MISE}(\hat{c}(\cdot; h_n)) - \int_0^1 \int_0^1 c(u, v)^2 \, du \, dv.
\end{align*}
This expression shows that the expected value of LSCV directly approximates the MISE, corrected by the integral term. In contrast, other theoretical expressions, such as AMISE, are only asymptotically unbiased, which means they approximate the MISE as \( n \to \infty \) but are biased for finite sample sizes.

The LSCV selector is then defined by \begin{align*}
    \hat{h}_{\text{LSCV}} := \arg\min_{h_n > 0}\text{LSCV}(h_n).
\end{align*}
To obtain $\hat{h}_{\text{LSCV}}$, numerical optimization is necessary. However, this process can be complicated by the LSCV function, which may contain multiple local minima, and its objective function can exhibit significant roughness, depending on $n$ and $c$. Consequently, optimization algorithms may sometimes converge to incorrect solutions. To avoid this, one can verify the solution by plotting $\text{LSCV}(h_n)$ over a range of $h_n$ values or by performing a search over a specified bandwidth grid. Hall (1983) \cite{hall1983large} and Stone (1984) \cite{stone1984asymptotically} demonstrated that the LSCV procedure yields a consistent sequence of smoothing parameters and is, in a specific sense, asymptotically optimal. 

Analyzing the MISE in \eqref{MISE} requires understanding the kernel moments in \eqref{moments dfn}, which is not immediately evident in \eqref{LSCV}. For large \( n \), the bandwidth becomes small enough that the probability mass outside the unit square tends to zero. Using this, Gijbels and Mielniczuk (1990) \cite{gijbels1990} derived the bias of the mirror reflection estimator (Theorem 3.2). Incorporating this concept, \eqref{LSCV} can be reformulated as:
 \begin{align}\label{2nd form of LSCV}
    \text{LSCV}(h)=\frac{R(K)^2}{nh^2} +& \sum_{i=1}^n\sum_{\substack{j=1\\i\neq j}}^n\bigg[ \frac{1}{n^2h^4}\int_0^1K\left(\frac{u-\hat{U_i}}{h_n}\right)K\left(\frac{u-\hat{U_j}}{h_n}\right)\ du \int_0^1K\left(\frac{v-\hat{V_i}}{h_n}\right)K\left(\frac{v-\hat{V_j}}{h_n}\right)\ dv\nonumber\\
    &\qquad -\frac{2}{n(n-1)h^2}K\left(\frac{\hat{U_i}-\hat{U_j}}{h_n}\right)K\left(\frac{\hat{V_i}-\hat{V_j}}{h_n}\right)\bigg].
\end{align}
An instructive exercise shows that for \( p \) even, the expectation of \eqref{2nd form of LSCV} equals \eqref{AMISE} minus the constant \( R(c(u,v)) \). Under the kernel properties in Assumptions \ref{assumptions}, we define:
\begin{align}\label{gammatautau}
    \gamma(\tau_1, \tau_2) = \int\int K(w_1)K(w_1+\tau_1)K(w_2)K(w_2+\tau_2) \, dw_1 \, dw_2 - 2K(\tau_1)K(\tau_2),
\end{align}
and let \begin{align}
    \tau_{ij,1}=(\hat{U}_i-\hat{U}_j)/h_n\ \text{and}\ \tau_{ij,2}=(\hat{V}_i-\hat{V}_j)/h_n,
\end{align}
then \eqref{2nd form of LSCV} becomes (replacing $n-1$ by $n$)\begin{align}\label{new LSCV}
    LSCV(h_n)=\frac{R(K)^2}{nh_n^2} + \frac{2}{n^2h_n^2}\mathop{\sum\sum}_{ \ i< j \ }\gamma(\tau_{ij,1},\tau_{ij,2}).
\end{align}
The following theorem provides the mean of the function \eqref{new LSCV} for fixed $h_n$.
\begin{theorem}
 For the LSCV kernel criterion \eqref{new LSCV}\begin{align}\label{AExpectation}
     E[LSCV(h_n)]=AMISE(h_n)-R(c)+O(n^{-1}).
 \end{align}   
\end{theorem}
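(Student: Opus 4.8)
The plan is to compute $E[LSCV(h_n)]$ directly from the representation \eqref{new LSCV}, exploiting the exchangeability of the transformed sample. The deterministic first term $R(K)^2/(nh_n^2)$ already supplies the variance contribution in \eqref{AMISE}, so everything reduces to the expectation of the double sum. Since $\{(\hat U_i,\hat V_i)\}$ is exchangeable, every summand $\gamma(\tau_{ij,1},\tau_{ij,2})$ has the same law, giving $E\big[\sum\sum_{i<j}\gamma(\tau_{ij,1},\tau_{ij,2})\big]=\binom{n}{2}\,E[\gamma(\tau_{12,1},\tau_{12,2})]$, and the prefactor collapses to $\frac{2}{n^2h_n^2}\binom{n}{2}=\frac{1}{h_n^2}\big(1-\tfrac1n\big)$. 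I would then treat $(\hat U_1,\hat V_1)$ and $(\hat U_2,\hat V_2)$ as two independent draws from the copula density $c$, absorbing the discrepancy between the empirical ranks $\hat U_i=\hat F_1(X_i)$ and the true transforms $U_i=F_1(X_i)$, as well as the omitted reflection summands, into the remainder via the consistency results of Gijbels and Mielniczuk (1990) and the remark (made before \eqref{new LSCV}) that the boundary mass is negligible once $h_n$ is small.

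The core step is to evaluate $E[\gamma(\tau_{12,1},\tau_{12,2})]$. Writing $\gamma(\tau_1,\tau_2)=\psi(\tau_1)\psi(\tau_2)-2K(\tau_1)K(\tau_2)$ with $\psi(\tau)=\int K(w)K(w+\tau)\,dw$ (which matches \eqref{gammatautau}), I would handle the two expectations by the substitution $s=(u_1-u_2)/h_n$, $t=(v_1-v_2)/h_n$, turning each into $h_n^2\!\int\!\!\int g(s)g(t)\,c(u_2+h_ns,v_2+h_nt)\,c(u_2,v_2)\,ds\,dt\,du_2\,dv_2$ with $g\in\{\psi,K\}$. A fourth-order Taylor expansion of $c(u_2+h_ns,v_2+h_nt)$, legitimate under Assumption $A_1$ (which is exactly why bounded fourth derivatives are imposed), together with the moment identities $\int\psi=\int K=1$, $\int s^2K=\mu_2$, $\int s^2\psi=2\mu_2$, $\int s^4K=\mu_4$, $\int s^4\psi=2\mu_4+6\mu_2^2$ and the vanishing of odd moments, then produces both expansions. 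The decisive phenomenon is that the zeroth-order parts combine to $-h_n^2R(c)$ while the second-order ($h_n^2$) contributions from $\psi$ and from $2K$ cancel identically, so the curvature term first appears at order $h_n^6$: collecting the surviving fourth-order terms yields $E[\gamma]=-h_n^2R(c)+\frac{h_n^6\mu_2^2}{4}\int_0^1\!\!\int_0^1 c\,(c_{uuuu}+2c_{uuvv}+c_{vvvv})\,du\,dv+o(h_n^6)$.

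To identify the last integral with $\beta$ I would integrate by parts twice in each variable, using $\int c\,c_{uuuu}=\int c_{uu}^2$, $\int c\,c_{vvvv}=\int c_{vv}^2$, and $\int c\,c_{uuvv}=\int c_{uu}c_{vv}$, so that $\int c\,(c_{uuuu}+2c_{uuvv}+c_{vvvv})=\int (c_{uu}+c_{vv})^2=\beta$, the quantity appearing in \eqref{AMISE}. Substituting $E[\gamma]=-h_n^2R(c)+\frac{h_n^6\mu_2^2}{4}\beta+o(h_n^6)$ and multiplying by $\frac{1}{h_n^2}\big(1-\tfrac1n\big)$ gives $\frac{h_n^4\mu_2^2}{4}\beta-R(c)$ plus the explicit $O(n^{-1})$ term $R(c)/n$ coming from the factor $\big(1-\tfrac1n\big)$; adding back the deterministic first term reconstructs $AMISE(h_n)-R(c)+O(n^{-1})$, which is \eqref{AExpectation}.

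The main obstacle is twofold. First, the cancellation of the $h_n^2$ terms means the statement hinges entirely on pushing the Taylor expansion to fourth order and tracking the $h_n^6$ coefficients correctly; a second-order expansion (which sufficed for the bias in Proposition \ref{prop:bias_variance}) would miss $\beta$ altogether. Second, and more delicately, the integration-by-parts identity $\int c\,(c_{uuuu}+2c_{uuvv}+c_{vvvv})=\beta$ is exact on $\mathbb{R}^2$ but on $[0,1]^2$ leaves boundary terms that need not vanish for a generic copula density; controlling these, together with the replacement of ranks by true uniforms and the dropping of the reflection terms, is precisely where the restriction to bounded densities with bounded derivatives and the mirror-reflection construction enter, and one must verify that the accumulated remainder is genuinely of smaller order than the retained $h_n^4$ and $n^{-1}$ contributions.
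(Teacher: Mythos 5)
Your proposal is correct and takes essentially the same route as the paper's own proof: reduce the double sum to $\binom{n}{2}\,E[\gamma(\tau_{12,1},\tau_{12,2})]$ with prefactor $(1-\tfrac1n)/h_n^2$, Taylor-expand $c$ to fourth order (the second-order terms cancelling exactly as you note), evaluate the kernel-moment combinations, and integrate by parts to identify $\beta$; your factorization $\gamma=\psi(\tau_1)\psi(\tau_2)-2K(\tau_1)K(\tau_2)$ with $\psi=K*K$ is the same computation as the paper's moment integrals \eqref{meansgamma} (which produce the identical values $-1/2$, $0$, $\mu_2^2$, $3\mu_2^2$), merely bookkept by factorizing $\gamma$ instead of splitting its domain into $\gamma_+$ and $\gamma_-$. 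The caveats you flag at the end --- boundary terms in the integration by parts on $[0,1]^2$, ranks versus true uniforms, and the dropped reflection terms --- are genuine gaps, but the paper's proof passes over them silently as well, so they do not distinguish your argument from the published one.
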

\begin{proof}
Recall that $K$ is a symmetric function with support on $[-1,1]$. Using this property, we define $\gamma_+(\tau_1,\tau_2)$ and $\gamma_-(\tau_1,\tau_2)$ to represent $\gamma(\tau_1,\tau_2)$ as given in \eqref{gammatautau}, over the intervals $[0,2]^2$ and $[-2,0]^2$, respectively. For $0 \leq \tau_1, \tau_2 \leq 2$, $\gamma_+(\tau_1,\tau_2)$ is given by:
\begin{align}\label{gamma+}
\gamma_+(\tau_1,\tau_2) \equiv \int_{-1}^{1-\tau_1} \int_{-1}^{1-\tau_2} K(w_1)K(w_1+\tau_1)K(w_2)K(w_2+\tau_2)\ dw_1\ dw_2 - 2K(\tau_1)K(\tau_2).
\end{align}
Similarly, for $-2 \leq \tau_1, \tau_2 \leq 0$, $\gamma_-(\tau_1,\tau_2)$ is defined analogously to \eqref{gamma+}, with the integration limits for both $\tau_1$ and $\tau_2$ replaced by $-1-\tau$ and $1$. Since $K$ is symmetric, it follows that $\gamma$ is also symmetric.

With these definitions in place, we can now evaluate:
\begin{align}
E[\gamma(\tau_{ij,1},\tau_{ij,2})]= &\int_0^1\int_0^1 c(u,v)\bigg[ \int_{u-2h_n}^u\int_{v-2h_n}^v\gamma_+\left(  \frac{u-u_1}{h_n}, \frac{v-v_1}{h_n}\right)c(u_1,v_1)\ du_1\ dv_1 +\nonumber\\
&\qquad\int_u^{u+2h_n}\int_v^{v+2h_n}\gamma_-\left(  \frac{u-u_1}{h_n}, \frac{v-v_1}{h_n}\right)c(u_1,v_1)\ du_1\ dv_1 \bigg]\ du\ dv\nonumber\\
&=h_n^2\int_0^1\int_0^1 c(u,v)\bigg[ \int_0^2\int_0^2 \gamma_+(\tau_{1},\tau_{2})\lbrace c(u-\tau_{1}h_n,v-\tau_{2}h_n) + \nonumber\\ &\qquad c(u+\tau_{1}h_n,v+\tau_{2}h_n)\rbrace\ d\tau_1\ d\tau_2 \bigg]\ du\ dv\nonumber\\
&=2 h_n^2\int_0^1\int_0^1 c(u,v)\bigg[ \int_0^2\int_0^2 \gamma_+(\tau_{1},\tau_{2})\bigg( c(u, v) + \sum_{|\kappa| = 2} \frac{h_n^{|\kappa|} \tau_1^{\kappa_1} \tau_2^{\kappa_2}}{\kappa_1! \kappa_2!} \frac{\partial^{|\kappa|} c(u,v)}{\partial u^{\kappa_1} \partial v^{\kappa_2}} +\nonumber\\
&\qquad \sum_{|\kappa| = 4} \frac{h_n^{|\kappa|} \tau_1^{\kappa_1} \tau_2^{\kappa_2}}{\kappa_1! \kappa_2!} \frac{\partial^{|\kappa|} c(u,v)}{\partial u^{\kappa_1} \partial v^{\kappa_2}} \bigg) \ d\tau_1\ d\tau_2 \bigg]\ du\ dv,
\end{align}
where $|\kappa|=\kappa_1+\kappa_2$ is the total order of differentiation. Now, for $\kappa_1$ and $\kappa_2$ even\begin{align}\label{meansgamma}
    \int_0^2\int_0^2 \tau_1^{\kappa_1}\tau_2^{\kappa_2}\gamma_+(\tau_{1},\tau_{2})\ d\tau_1\ d\tau_2=&\frac{1}{2}\int_{-1}^1 K(w_1)\int_{-1}^1 (s_1-w_1)^{\kappa_1}K(s_1)\ ds_1\ dw_1\times \nonumber\\ &\qquad\int_{-1}^1 K(w_2)\int_{-1}^1 (s_2-w_2)^{\kappa_2}K(s_2)\ ds_2\ dw_2-\mu_{\kappa_1}\mu_{\kappa_2},
\end{align}
where we have used the substitution $s_1=w_1+\tau_1$ and $s_2=w_2+\tau_2$. The expression \eqref{meansgamma} equals to $-1/2, 0, \mu_2^2$ and $3\mu_2^2$ for $\lbrace \kappa_1,\kappa_2\rbrace=\lbrace 0,0 \rbrace, \lbrace 0,2 \rbrace, \lbrace 2,2 \rbrace, \lbrace 0,4 \rbrace$, respectively. Hence \begin{align}\label{exptauijs}
 E[\gamma(\tau_{ij,1},\tau_{ij,2})]=   & h_n^2 \int_0^1 \int_0^1  \bigg[ -c(u, v)^2 + \frac{\mu_2^2h_n^4}{4} \bigg( c(u, v)\frac{\partial^4 c(u,v)}{\partial u^4} +2c(u, v)\frac{\partial^4 c(u,v)}{\partial u^2 \partial v^2} + \nonumber\\
 &\qquad c(u, v)\frac{\partial^4 c(u,v)}{\partial v^4} \bigg) \bigg] \ du\ dv\nonumber\\
 &= h_n^2\bigg[ -R(c(u,v))+\frac{\mu_2^2h_n^4}{4}\int_0^1 \int_0^1\left[ c_{uu}(u,v)+c_{vv}(u,v) \right]^2 \bigg],
\end{align}
where we have used integration by parts. Thus \eqref{AExpectation} follows from \eqref{exptauijs}, \eqref{new LSCV} and \eqref{MISE}.
\end{proof}

The roughness of $\hat{c}(u,v)$, represented by $R(\hat{c}(\cdot;h_n))$ in (\ref{LSCV}), when using the Epanechnikov kernel can be described in terms of the convolution of the kernel $K$ with itself. Before presenting this result, we first state the following lemma, which will be instrumental in the proof.

\begin{lemma}\label{lemmaKint}
Let \( K \) be a symmetric kernel. Then
\begin{align*}
    \int K\left( \frac{u - U_i}{h} \right) K\left( \frac{u - U_j}{h} \right) \, du = h \left( K * K \right) \left( \frac{U_i - U_j}{h} \right),
\end{align*}
where \( K * K \) denotes the convolution of \( K \) with itself.
\end{lemma}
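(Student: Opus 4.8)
The plan is to prove the identity by a single change of variables followed by an appeal to the symmetry of $K$, which is the only hypothesis available. Throughout I would abbreviate $\tau = (U_i - U_j)/h$ to keep the bookkeeping light, so that the target identity reads $\int K((u-U_i)/h)K((u-U_j)/h)\,du = h\,(K*K)(\tau)$.

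First I would substitute $w = (u - U_i)/h$, so that $u = U_i + hw$ and $du = h\,dw$. Under this change the first factor becomes $K(w)$, while the second becomes $K((u-U_j)/h) = K(w + (U_i - U_j)/h) = K(w+\tau)$, and the left-hand side transforms into
\[
\int K\!\left(\frac{u-U_i}{h}\right)K\!\left(\frac{u-U_j}{h}\right)du = h\int K(w)\,K(w+\tau)\,dw.
\]

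Next I would identify the remaining integral with $(K*K)(\tau)$. Recalling the convention $(K*K)(x) = \int K(t)\,K(x-t)\,dt$, the evenness of $K$ gives $K(w+\tau) = K(-(w+\tau))$; substituting $t = -w$ and using $K(-t) = K(t)$ then turns $\int K(w)K(w+\tau)\,dw$ into $\int K(t)\,K(t-\tau)\,dt$, and a final application of symmetry, $K(t-\tau) = K(\tau - t)$, yields $\int K(t)\,K(\tau-t)\,dt = (K*K)(\tau)$. Restoring $\tau = (U_i-U_j)/h$ gives the claimed formula.

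Since the argument is essentially a one-line change of variables, there is no serious obstacle; the only point requiring care is reconciling the orientation of the shifted argument ($w+\tau$ versus $\tau-w$) with the chosen convolution convention, and this is precisely where the symmetry of $K$ enters. I would also remark in passing that the same computation shows $K*K$ is itself even, a fact that is implicitly used when the lemma is applied to evaluate the roughness term $R(\hat{c}(\cdot;h_n))$ appearing in (\ref{LSCV}).
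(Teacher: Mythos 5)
Your proof is correct and follows essentially the same route as the paper's: a change of variables reducing the integral to $h\int K(w)K(w+\tau)\,dw$ with $\tau=(U_i-U_j)/h$, and then the symmetry of $K$ to reconcile the shifted argument with the convolution convention $(K*K)(\tau)=\int K(t)K(\tau-t)\,dt$. The only cosmetic difference is that the paper performs the scaling and the shift as two separate substitutions, while you combine them into one and handle the orientation with a sign flip.
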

\begin{proof}
To prove Lemma \ref{lemmaKint}, we start by applying the substitution \( u = hw \), which implies \( du = h \, dw \). Substituting this into the integral, we obtain:
\begin{align}\label{Kint}
    \int K\left( \frac{u - U_i}{h} \right) K\left( \frac{u - U_j}{h} \right) \, du 
    = h \int K\left( \frac{U_i}{h} - w \right) K\left( w - \frac{U_j}{h} \right) \, dw.
\end{align}

The expression on the right side of equation \eqref{Kint} can be interpreted as a product of two shifted kernels centered at \( U_i/h \) and \( U_j/h \), respectively. For clarity, let us denote these shifted kernels as \(
K_{U_i}(w) = K\left( U_i/h - w \right) \quad \text{and} \quad K_{U_j}(w) = K\left( w - U_j/h \right).
\)
Thus, we can rewrite equation \eqref{Kint} as:
\begin{align}\label{Kint1}
    h \int K\left( \frac{U_i}{h} - w \right) K\left( w - \frac{U_j}{h} \right) \, dw 
    = h \int K_{U_i}(w) K_{U_j}(w) \, dw.
\end{align}

Now, to further simplify the right side of \eqref{Kint1}, let us make the substitution \( w_1 = w - U_j/h \), so that \( w = w_1 + U_j/h \) and \( dw = dw_1 \). Under this substitution \(
K_{U_j}(w) = K(w_1) \quad \text{and} \quad K_{U_i}(w) = K\left( (U_i - U_j)/h - w_1 \right).
\)
Thus, we can rewrite \eqref{Kint1} as:
\[
h \int K\left( \frac{U_i - U_j}{h} - w_1 \right) K(w_1) \, dw_1 = h \left( K * K \right) \left( \frac{U_i - U_j}{h} \right),
\]
as required.
\end{proof}

For the Epanechnikov kernel, the convolution $(K * K)(x)=\int K(t)K(x-t)\ dt$ implies that $-1\leq t\leq 1$ and $-1\leq x-t\leq 1$. These two inequalities imply that $x-1\leq t\leq x+1$ and $x\in [-2,2].$ Thus the integral of the convolution of the Epanechnikov kernel by itself can be broken into three cases based on $x$ as\begin{align}\label{convof Epi}
   (K * K)(x)=\begin{cases}
       \int_{-1}^{x+1}\frac{9}{16}(1-t^2)(1-(x-t)^2)dt, & -2\leq x\leq -1,\\
       \int_{-1}^{1}\frac{9}{16}(1-t^2)(1-(x-t)^2)dt, & -1\leq x\leq 1,\\
       \int_{x-1}^{1}\frac{9}{16}(1-t^2)(1-(x-t)^2)dt, & 1\leq x\leq 2,\\
       0, & \text{otherwise.}
   \end{cases} 
\end{align}

\begin{prop}\label{Rcprop}
 Let \( K \) be a univariate kernel. The roughness of the estimator \( \hat{c}(u,v) \) given in \eqref{mult kernel}, when using the Epanechnikov kernel, is of the form
 \begin{align}\label{R(chat)1}
     R(\hat{c}) = \frac{81}{256n^2h_n^2} \sum_{i=1}^n \sum_{j=1}^n \sum_{l=1}^9 \sum_{m=1}^9 (K * K)(\xi_{ijlm}) \left( K * K \right)(\zeta_{ijlm}),
 \end{align} 
 where \( \xi_{ijlm} = (U_{il} - U_{jm}) / h_n \), \( \zeta_{ijlm} = (V_{il} - V_{jm}) / h_n \), and \( (K * K)(x) \) is the convolution of the Epanechnikov kernel with itself.
\end{prop}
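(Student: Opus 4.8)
The plan is to compute $R(\hat c)=\int\int \hat c(u,v)^2\,du\,dv$ directly from the definition \eqref{mult kernel} and to collapse each one-dimensional integral to a kernel self-convolution using Lemma \ref{lemmaKint}. First I would substitute \eqref{mult kernel} and expand the square, which produces a quadruple sum over the index pairs $(i,l)$ and $(j,m)$, with an overall prefactor $1/n^2$ coming from squaring the $1/n$ in \eqref{mult kernel}. The decisive structural point is that, because the kernel is multiplicative, the integrand of each term factorizes into a function of $u$ times a function of $v$; by Fubini the double integral therefore splits as the product $\left[\int K_{h_n}(u-\hat U_{il})K_{h_n}(u-\hat U_{jm})\,du\right]\left[\int K_{h_n}(v-\hat V_{il})K_{h_n}(v-\hat V_{jm})\,dv\right]$.

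Next I would evaluate each of these one-dimensional integrals. Writing $K_{h_n}(x)=h_n^{-1}K(x/h_n)$ extracts a factor $h_n^{-2}$ from each bracket, and Lemma \ref{lemmaKint} identifies what remains with $h_n\,(K*K)(\xi_{ijlm})$ for the $u$-factor and $h_n\,(K*K)(\zeta_{ijlm})$ for the $v$-factor, where $\xi_{ijlm}=(\hat U_{il}-\hat U_{jm})/h_n$ and $\zeta_{ijlm}=(\hat V_{il}-\hat V_{jm})/h_n$ exactly as in the statement. Each bracket thus collapses to $h_n^{-1}(K*K)(\cdot)$, and reassembling the two brackets together with the $1/n^2$ prefactor gives an overall $1/(n^2 h_n^2)$ multiplying $\sum_{i,j,l,m}(K*K)(\xi_{ijlm})(K*K)(\zeta_{ijlm})$, where $K*K$ is the full Epanechnikov self-convolution of \eqref{convof Epi}.

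It then remains to recover the explicit constant $81/256$, which is purely a matter of tracking the normalization. Since $K(x)=\tfrac34(1-x^2)_+$, the convolution in \eqref{convof Epi} carries a factor $\tfrac{9}{16}$, so separating that prefactor from each of the two convolution factors yields $(\tfrac{9}{16})^2=\tfrac{81}{256}$ out front and leaves the self-convolution of the bare polynomial $(1-x^2)_+$. This produces the claimed expression $R(\hat c)=\tfrac{81}{256\,n^2 h_n^2}\sum_{i,j,l,m}(K*K)(\xi_{ijlm})(K*K)(\zeta_{ijlm})$, with $K*K$ in the statement read as the self-convolution of $(1-x^2)_+$; careful accounting of these normalizing constants is all that is needed here.

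I expect the main obstacle to be the domain of integration rather than the algebra. The estimator in \eqref{mult kernel} is supported on $[0,1]^2$ (it is set to zero elsewhere in Definition \ref{definition1}), whereas Lemma \ref{lemmaKint} integrates the kernel product over the whole real line. The step requiring the most justification is therefore the replacement of $\int_0^1\int_0^1$ by $\int_{\mathbb R}\int_{\mathbb R}$ before the lemma can be applied verbatim to each one-dimensional factor. This is precisely the role of the mirror-reflection construction: the nine reflected atoms redistribute the kernel mass that would otherwise spill across the edges of the unit square, and for $h_n\to 0$ the reflected atoms whose separation exceeds the kernel support contribute nothing to $K*K$. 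Making this reduction precise, and checking that the attendant boundary discrepancy is controlled as $h_n\to 0$, is the crux of the argument; once it is granted, the factorization and the convolution identity make the remaining manipulations routine.
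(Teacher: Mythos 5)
Your main computation coincides with the paper's own proof: expand the square in \eqref{mult kernel} to get the quadruple sum with the $1/n^2$ prefactor, factorize each term by Fubini into a $u$-integral times a $v$-integral (this is exactly where the multiplicative kernel is used), and collapse each one-dimensional factor with Lemma \ref{lemmaKint}, each contributing $h_n^{-1}(K*K)(\cdot)$ once the normalizations are tracked. Your remark on the constant is also well taken, and is in fact sharper than the paper's write-up: the paper extracts $(3/4)^4 = 81/256$ up front and then still labels the remaining bare-polynomial integrals as $(K*K)$ of \eqref{convof Epi}, which already carries a factor $9/16$ per convolution; as you observe, \eqref{R(chat)1} is consistent only if $(K*K)$ there is read as the self-convolution of $(1-x^2)_+$ (equivalently, keep \eqref{convof Epi} as written and delete the prefactor $81/256$).

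Where you go astray is the final paragraph, the step you single out as the crux. First, it is not needed: the proposition concerns \eqref{mult kernel}, which, unlike \eqref{modified estimate of c}, carries no truncation to $[0,1]^2$, and the roughness functional \eqref{l2norm} integrates over the whole real line (also the paper's stated default convention). With that reading, \eqref{R(chat)1} is an exact algebraic identity, valid for every $n$ and $h_n$; the paper's proof involves no limit in $h_n$ at all. Second, and more seriously, the reduction you propose could not be carried out: the discrepancy between $\int_0^1\int_0^1$ and $\int_{\mathbb{R}}\int_{\mathbb{R}}$ applied to the square of the sum in \eqref{mult kernel} does not vanish as $h_n \to 0$. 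Each reflection is an isometry, so for interior data points the type-$l$ reflections of observations $i$ and $j$ sit at exactly the same separation as the originals; the whole-plane integral therefore collects nine essentially identical copies of every contribution that the unit-square integral collects once (the same holds for the diagonal terms $i=j$, $l=m$). Concretely, under the independence copula the sum in \eqref{mult kernel} is close to $1$ on all of $[-1,2]^2$, so the whole-plane integral is close to $9$ while the unit-square integral is close to $1$: the two quantities differ by a factor close to $9$, not by $o(1)$. Your proof stands only because this step is superfluous; if the roughness over $[0,1]^2$ of the truncated estimator \eqref{modified estimate of c} were what is wanted, neither your limiting argument nor formula \eqref{R(chat)1} would deliver it.
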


\begin{proof}
Using Definition \ref{definition1},
\begin{align}\label{roughness}
    R(\hat{c})&= \frac{1}{n^2}\int\int \left(\sum\limits_{i=1}^n\sum\limits_{l=1}^9 K_{h_n}(u-U_{il})K_{h_n}(v-V_{il})\right)^2\ du\ dv\nonumber\\
    &=\frac{1}{n^2} \sum\limits_{i=1}^n\sum\limits_{l=1}^9\sum\limits_{j=1}^n\sum\limits_{m=1}^9 \int\int K_{h_n}(u-U_{il})K_{h_n}(v-V_{il})K_{h_n}(u-U_{jm})K_{h_n}(v-V_{jm})\ du\ dv
\end{align}
Using the Epanechnikov kernel, then (\ref{roughness}) can be written as
\begin{align}\label{roughness2}
 R(\hat{c})&=\frac{81}{256n^2h^4}  \sum\limits_{i=1}^n\sum\limits_{l=1}^9\sum\limits_{j=1}^n\sum\limits_{m=1}^9 \int\int\bigg\{ \bigg( 1-\bigg(\frac{u-U_{il}}{h_n}\bigg)^2 \bigg)\bigg( 1-\bigg(\frac{v-V_{il}}{h_n}\bigg)^2 \bigg) \bigg( 1-\bigg(\frac{u-U_{jm}}{h_n}\bigg)^2 \bigg)\times\nonumber\\ &\qquad\qquad\bigg( 1-\bigg(\frac{v-V_{jm}}{h_n}\bigg)^2 \bigg)\bigg\} \ du\ dv\nonumber\\
 &=\frac{81}{256n^2h_n^4}  \sum\limits_{i=1}^n\sum\limits_{l=1}^9\sum\limits_{j=1}^n\sum\limits_{m=1}^9 \int \bigg( 1-\bigg(\frac{u-U_{il}}{h_n}\bigg)^2 \bigg)\bigg( 1-\bigg(\frac{u-U_{jm}}{h_n}\bigg)^2 \bigg)\ du\times \nonumber\\ &\qquad\qquad \int \bigg( 1-\bigg(\frac{v-V_{il}}{h_n}\bigg)^2 \bigg)\bigg( 1-\bigg(\frac{v-V_{jm}}{h_n}\bigg)^2 \bigg)\ dv,\nonumber\\
 &=\frac{81}{256n^2h_n^2}\sum\limits_{i=1}^n\sum\limits_{j=1}^n\sum\limits_{l=1}^9\sum\limits_{m=1}^9( K* K)(\xi_{ijlm})\left( K* K \right)(\zeta_{ijlm}),
\end{align}
where we have used Lemma \ref{lemmaKint} to arrive at the last line of (\ref{roughness2}), $(K* K)$ is the convolution of the Epanechnikov kernel by itself defined in (\ref{convof Epi}), and $\xi_{ijlm}=(U_{il}-U_{jm})/h_n$, $\zeta_{ijlm}=(V_{il}-V_{jm})/h_n$. 
\end{proof}

\subsubsection{Biased cross-validation (BCV)}

Unlike the LSCV, which relies solely on the MISE, the Biased Cross-Validation (BCV) approach is a hybrid that combines both plug-in and cross-validation techniques. To introduce the bivariate case, we begin by examining the univariate BCV cases discussed in existing literature. The work by Scott and Terrell (1987) \cite{scott1987biased} proposed minimizing the AMISE:
\begin{align}\label{AMISE2}
    \text{AMISE}(h) = \frac{R(K)}{nh} + \frac{h^4 \mu_2(K)^2}{4} R\big(f''(x)\big),
\end{align}
where $h$ represents the bandwidth for the univariate kernel $K$, $R(K)$ and $\mu_2(K)$ are as defined in \eqref{l2norm} and \eqref{moments dfn}, respectively, and $f''(x)$ is the second derivative of the underlying univariate density. For more details on univariate kernel density estimation, see Scott and Terrell (1987) \cite{scott1987biased} and Wand and Jones (1995) \cite{wand1995kernel}. The only unknown term in the AMISE expression in \eqref{AMISE2} is the second derivative of the underlying density.

Scott and Terrel (1987) \cite{scott1987biased} proposed an estimate of \( R(f^{''}(x)) \) of the form $\hat{R}^1(f^{''}) = R(\hat{f}^{''}) - R(K^{''})/nh^5,$
where \( \hat{f}^{''} \) is the second derivative of the univariate KDE, \( K \). They gave the corresponding AMISE estimate as
\begin{align*}
    BCV^1(h) = \frac{R(K)}{nh} + \frac{h^4 \mu_2(K)^2}{4} \hat{R}^1(f^{''}).
\end{align*}

Hall and Marron (1987) \cite{hall1987estimation} investigated the precision of estimators for the quantity $R(f^{''}(x))$ and derived an estimator using the identity
\begin{align*}
    R(f^{''}) = \int f^{''}(x)^2 \, dx = \int f^{(iv)}(x) f(x) \, dx = E[f^{''}(x)],
\end{align*}
while also discussing its implications for bandwidth selection. Utilizing this identity, they demonstrated that $\hat{R}^2(f^{''}) = \frac{1}{n} \sum_{i=1}^n \hat{f}_{-i}^{(iv)}(x_i)$, leading to the criterion function 

\begin{align*}
    BCV^2(h) = \frac{R(K)}{nh} + \frac{h^4 \mu_2(K)^2}{4} \hat{R}^2(f^{''}).
\end{align*}

In the univariate case, the estimator $\hat{R}^2$ has a higher variance than $\hat{R}^1$ (Sain \textit{et al.}, 1994 \cite{sain1994cross}). However, this increased variance is accompanied by a reduction in bias. Importantly, $\hat{R}^2$ offers greater ease of extension to multivariate cases (Sain \textit{et al.}, 1994 \cite{sain1994cross}), making it an appealing choice for our analysis due to its implementational simplicity and improved bias characteristics.

Consider the AMISE given in (\ref{AMISE}). We can rewrite this as \begin{align}\label{BCVAMISE}
    \text{AMISE}(h) = &\ \frac{R(K)^2}{n h_n^2} + \frac{h_n^4 \mu_2(K)^2}{4} \Bigg( \int_0^1 \int_0^1 c_{uu}(u,v)^2 \, du \, dv \nonumber\\
    &\ + \int_0^1 \int_0^1 c_{vv}(u,v)^2 \, du \, dv + 2 \int_0^1 \int_0^1 c_{u}(u,v) c_{v}(u,v) \, du \, dv \Bigg).
\end{align}
where $K, R(K)$ and $\mu_2(K)$ are as used in (\ref{AMISE}) and $\beta$ in  (\ref{AMISE}) has been expanded to obtain the factor in parenthesis in the second term. 

Using the bivariate form of $\hat{R}^2$, we can estimate the roughness of the first second-order partial derivative in the AMISE expression as follows:
\begin{align*}
    \int_0^1 \int_0^1 c_{uu}(u,v)^2 \, du \, dv
    =\frac{\partial^4c(u,v)}{\partial u^4}\ c(u,v)du\ dv=E\left[\frac{\partial^4c(u,v)}{\partial u^4} \right].
\end{align*}
The second term in the AMISE expression can be dealt with similarly while the third is estimated  as\begin{align*}
 \int_0^1 \int_0^1 c_{uu}(u,v) c_{vv}(u,v) \, du \, dv= \frac{\partial^4c(u,v)}{\partial u^2\ \partial v^2}\ c(u,v)du\ dv=  E\left[\frac{\partial^4c(u,v)}{\partial u^2\ \partial v^2} \right]
\end{align*} 
These expressions can be estimated by \begin{align}\label{BCV11}
    n^{-1}\sum\limits_{i=1}^n \frac{\partial^4\hat{c}_{-i}(U_i,V_i)}{\partial u^4} 
\end{align}
and
\begin{align}\label{BCV12}
    n^{-1}\sum\limits_{i=1}^n \frac{\partial^4\hat{c}_{-i}(U_i,V_i)}{\partial u^2\ \partial v^2}. 
\end{align}
Substituting \eqref{BCV11} and \eqref{BCV12} into \eqref{BCVAMISE} we obtain the BCV function. Typically, a kernel with compact support, such as the Epanechnikov kernel, may not suffice for higher-order derivatives due to discontinuities at the boundaries. A Gaussian kernel or other infinitely differentiable kernels are more appropriate choices.

\section{Simulation Study}\label{simulation}
One of the most insightful aspects of working with copulas is the ability to create exploratory plots that visually represent the dependence structure. In this section, we present visualizations of various aspects of the true density and the mirror-reflection estimator, based on simulated data \((n = 500)\) generated from a Frank copula with parameter \(\theta = 5\) (\(\text{Kendall's } \tau \approx 0.46\)). All computations are coded on \texttt{R Studio} software. Note that the Frank copula has a bounded density. The simple mirror-reflection estimator fails to capture the tail behavior of unbounded densities, such as those of the Gaussian copula or any copula exhibiting tail dependence.

Figure~\ref{fig1}(a) below shows the scatterplot of the original data generated from the Frank copula model. In Figure~\ref{fig1}(b), the data is transformed using the inverse of the Gaussian cumulative distribution function (CDF), which highlights a positive dependence between the variables.

\begin{figure}[ht!]
    \centering
    \begin{subfigure}[b]{0.35\textwidth}
   \includegraphics[width=\textwidth, height=5cm]{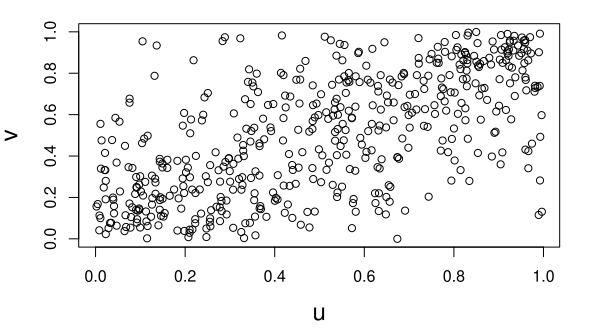}
        \caption{}
        \label{fig1a}
    \end{subfigure}
   \hspace{0.05\textwidth}
    \begin{subfigure}[b]{0.35\textwidth}
        \includegraphics[width=\textwidth, height=5cm]{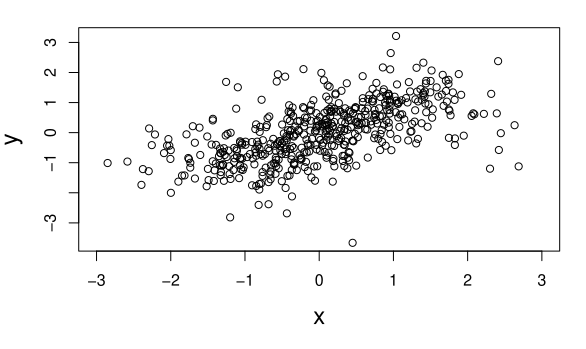}
        \caption{}
        \label{fig1b}
    \end{subfigure}
    \caption{(a) Scatterplot of original data from the Frank copula sample and (b) scatterplot of the transformed sample.}
    \label{fig1}
\end{figure}

To address the bandwidth selection problem, we considered various sample sizes and calculated the bandwidths that optimize the AMISE and minimize the LSCV\((h)\) criterion, respectively. Table~\ref{table:1} presents a comparison of the optimal bandwidths (bw), \(h^*_n\) and \(\hat{h}_{LSCV}\), along with the ISE for different sample sizes. The rule-of-thumb selector has a convergence rate much faster than the LSCV selector.

\begin{table}[ht]
    \centering
    \renewcommand{\arraystretch}{1.6}
    \begin{tabular}{c|cc|cc|cc|cc}
        \hline
        & \multicolumn{2}{c|}{$n=100$} & \multicolumn{2}{c|}{$n=200$} & \multicolumn{2}{c|}{$n=500$} & \multicolumn{2}{c}{$n=1000$} \\
        \hline
        & bw & ISE & bw & ISE & bw & ISE & bw & ISE \\
        \hline
        $h^*_n$ & 0.239 & 1.138 & 0.213 & 1.030 & 0.183 & 0.917 & 0.163 & 0.652 \\
        $h_{\text{LSCV}}$ & 0.244 & 1.227 & 0.238 & 1.094 & 0.196 & 1.052 & 0.199 & 1.007 \\
        \hline
    \end{tabular}
    \vspace{1mm} 
    \caption{Comparison of $h^*_n$ and $\hat{h}_{\text{LSCV}}$ for different sample sizes ($n$) in terms of bandwidth (bw) and ISE.}
    \label{table:1}
\end{table}

Marginal normal contour plots of the true density and the mirror-reflection estimator are shown in Figure \ref{fig1(1)}. Figure \ref{fig1(1)} (a) provides the contour plots of the true density, while Figure \ref{fig1(1)} (b) displays the mirror-reflection estimator based on the simulated data.  Bandwidths for the kernel density estimates were selected on the bases of AMISE-optimality. Examining the contour plots in Figure \ref{fig1(1)}, we conclude that the bandwidth selection rules are functioning appropriately for our finite samples. Using smaller bandwidths(resp., sample sizes) would result in wiggly estimates, which would make the visualizations a little unpleasant.

\begin{figure}[ht!]
    \centering
    \begin{subfigure}[b]{0.45\textwidth}
        \includegraphics[width=1\textwidth, height=5cm]{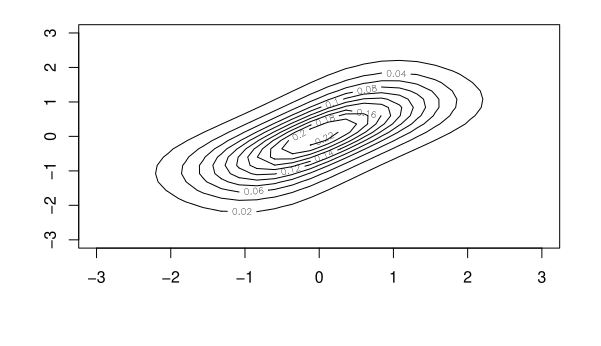}
        \caption{}
        \label{fig1(1)a}
    \end{subfigure}
    \hspace{0.05\textwidth}
    \begin{subfigure}[b]{0.45\textwidth}
        \includegraphics[width=.9\textwidth, height=5cm]{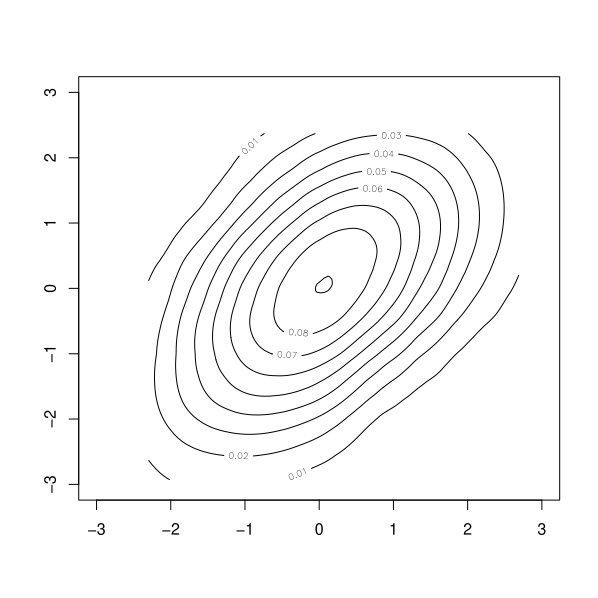}
        \caption{}
        \label{fig1(1)b}
    \end{subfigure}
    \caption{(a) Contour plot of the true density; (b) Contour plot of the mirror reflection estimates on simulated data \textcolor{red}(n=500) of a Frank copula. Bandwidths are selected based on the LSCV criterion.}
    \label{fig1(1)}
\end{figure}

Plot (a) in Figure \ref{fig1(3)} illustrates the mirror-reflection estimate of the density of the transformed data. The perspective plots in (b) shows the estimated density based on the simulated data of the Frank copula. Figure \ref{fig1(3)} (b) reveals some deviation of the estimated density from the true density (see the true Frank copula density plot in Nagler 2014 \cite{nagler2014kernel}), mainly due to the undersmoothing in the central region of the estimates. However, our bandwidth selection approach strikes a favorable balance between bias and variance, resulting in plots that closely align with the true density even with smaller samples.

\begin{figure}[ht!]
    \centering
    \begin{subfigure}[b]{0.35\textwidth}
        \includegraphics[width=\textwidth, height=4.5cm]{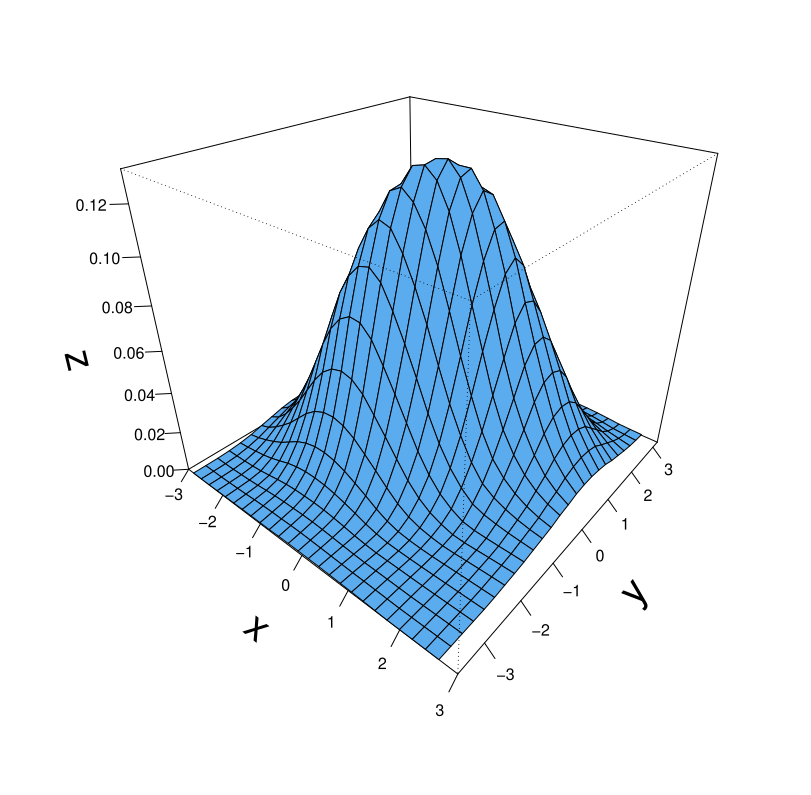}
        \caption{}
        \label{fig1(3)a}
    \end{subfigure}
    \hspace{0.05\textwidth} 
    \begin{subfigure}[b]{0.35\textwidth}
        \includegraphics[width=\textwidth, height=4.5cm]{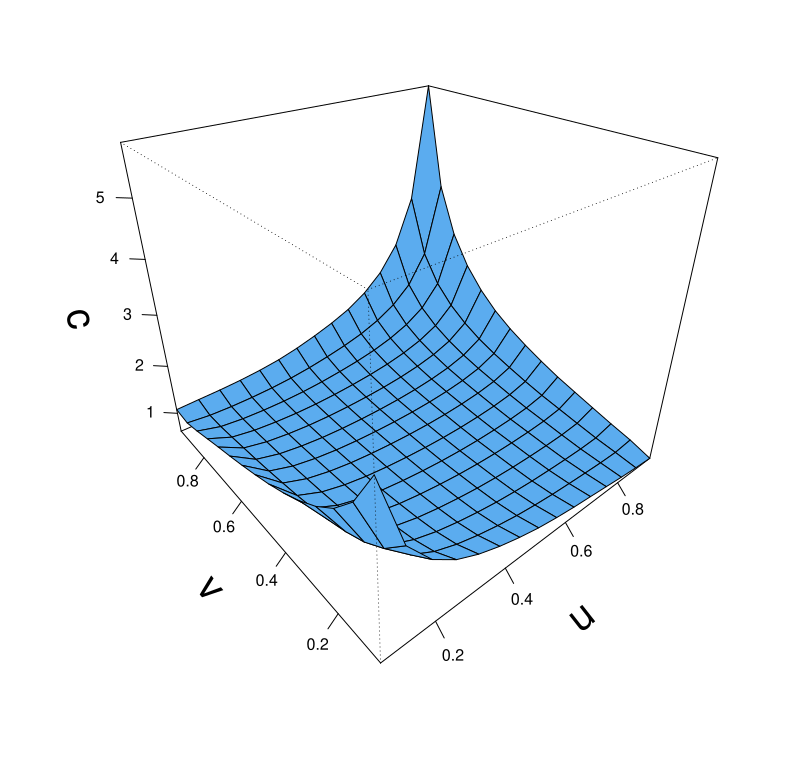}
        \caption{}
        \label{fig1(3)c}
    \end{subfigure}
    \caption{ (a) Mirror reflection estimate for the Frank copula transformed sample; (b) perspective plot for the true density of the Frank copula, and (c) perspective plot of the mirror-reflection kernel density estimate.}
    \label{fig1(3)}
\end{figure}

\section{Real-life data Application}\label{real-life}
In the practical application part of our study, we used the Wisconsin Breast Cancer Diagnostic Dataset (WBCDD) for analysis. This data set was derived from the examination of cell nuclei characteristics in 569 images obtained by Fine Needle Aspiration (FNA) of breast masses. Each sample in the dataset is categorized as either `Benign' or `Malignant', corresponding to noncancerous and cancerous cases, respectively.

The data set was compiled through the efforts of Dr. William Wolberg of Wisconsin Hospital, who provided the breast mass images, and William Nick Street from the Department of Computer Sciences, University of Wisconsin, who digitized the images in November 1995 (O. Mangasarian and W. Wolberg \cite{mangasarian1990cancer}). It contains 569 instances with 30 numeric features, in addition to an ID column and a class label. Among these samples, 357 are benign, while 212 are malignant. Notably, the data set does not have any missing values.

For this particular study, we chose two variables: the mean radius (column 2) and the mean concavity (column 8). The first step involved empirical estimation of the marginal distributions of these variables. This was achieved by computing their empirical cumulative distribution functions (ECDFs), \(\hat{F}_1\) and \(\hat{F}_2\), based on the observed data.  Specifically, for an observation \((X_i, Y_i)\), the transformed values were calculated as \((\hat{U}_i, \hat{V}_i)\), where \(\hat{U}_i = \hat{F}_1(X_i), \quad \hat{V}_i = \hat{F}_2(Y_i)\). This transformation effectively maps the data to the unit square \([0, 1]^2\) while preserving the dependence structure between the variables. Once the data was transformed to uniform margins, a scatter plot was created to visualize the relationships between the variables in their new scale. This visualization offered insights into the structure of dependence present in the dataset, even before any further transformations. 

Figure \ref{fig2} (a) shows a scatter plot of the data for the two variables: mean radius (column 2) and mean concavity (column 8) from the Wisconsin Cancer dataset, after transformation using their marginal empirical distribution functions. Figure \ref{fig2} (b) displays a scatter plot of the same data after transformation to the standard normal distribution. The scatter plot in (b) shows that the data have neither lower nor upper tail dependence.

\begin{figure}[ht!]
    \centering
    \begin{subfigure}[b]{0.40\textwidth}
        \centering
        \includegraphics[width=\textwidth, height=6cm]{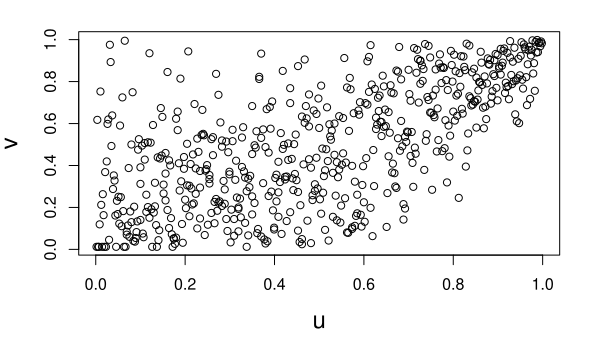}
        \caption{}
        \label{fig2a}
    \end{subfigure}
    \hspace{0.05\textwidth}
    \begin{subfigure}[b]{0.40\textwidth}
        \centering
        \includegraphics[width=\textwidth, height=6cm]{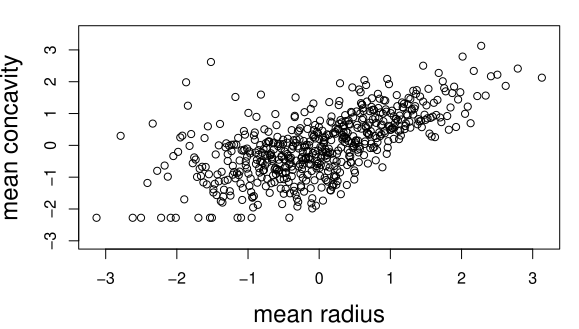}
        \caption{}
        \label{fig2b}
    \end{subfigure}
    \caption{(a) Scatter plot of data after transformation by its marginal empirical distribution function; (b) scatter plot of the transformed data with standard normal margins.}
    \label{fig2}
\end{figure}
The dependence structure of the WBCDD was then depicted using a contour plot. Figure \ref{fig2(1)} (a) shows contour plot of the copula density combined with standard normal margins, (b) the kernel density estimate for transformed sample (with standard normal margins),  (c) a surface or perspective plot of the copula density. The bandwidths were selected on the basis of the optimality of the LSCV criterion.

 \begin{figure}[ht!]
    \centering
    \begin{subfigure}[b]{0.32\textwidth}
        \centering
        \includegraphics[width=\textwidth, height=5cm]{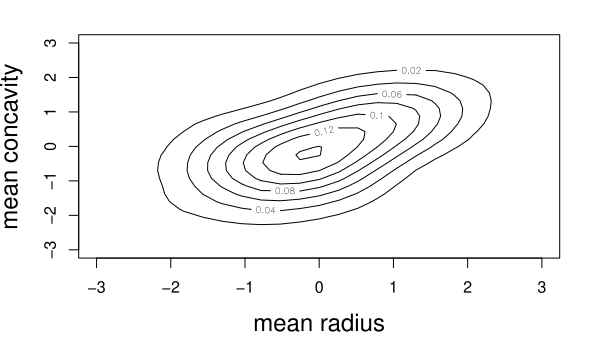}
        \caption{}
        \label{fig2(1)a}
    \end{subfigure}
    \begin{subfigure}[b]{0.32\textwidth}
        \centering
        \includegraphics[width=\textwidth, height=5cm]{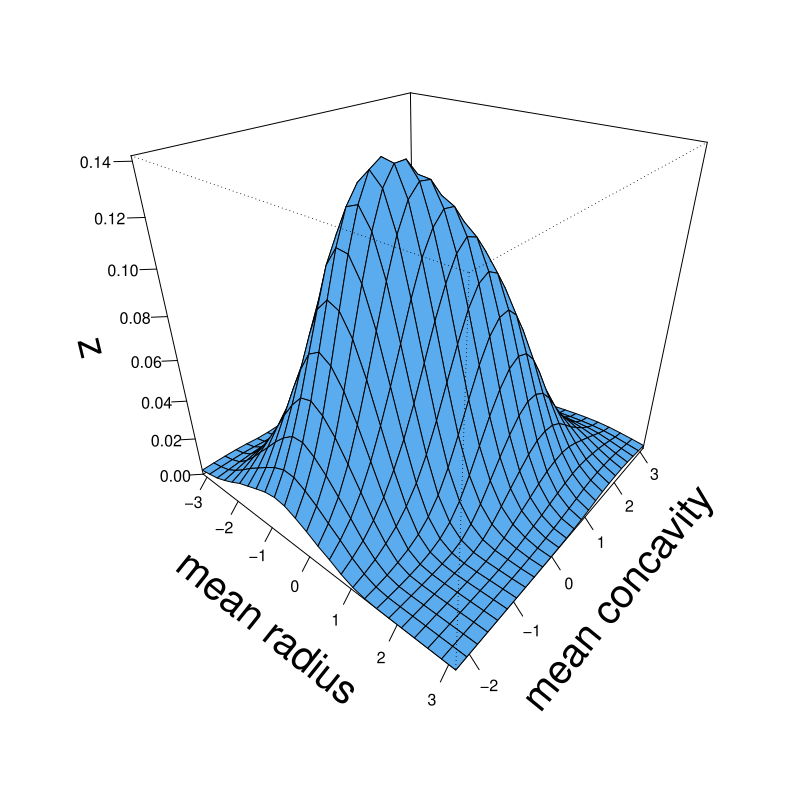}
        \caption{}
        \label{fig2(1)b}
    \end{subfigure}
    \begin{subfigure}[b]{0.32\textwidth}
        \centering
        \includegraphics[width=\textwidth, height=5cm]{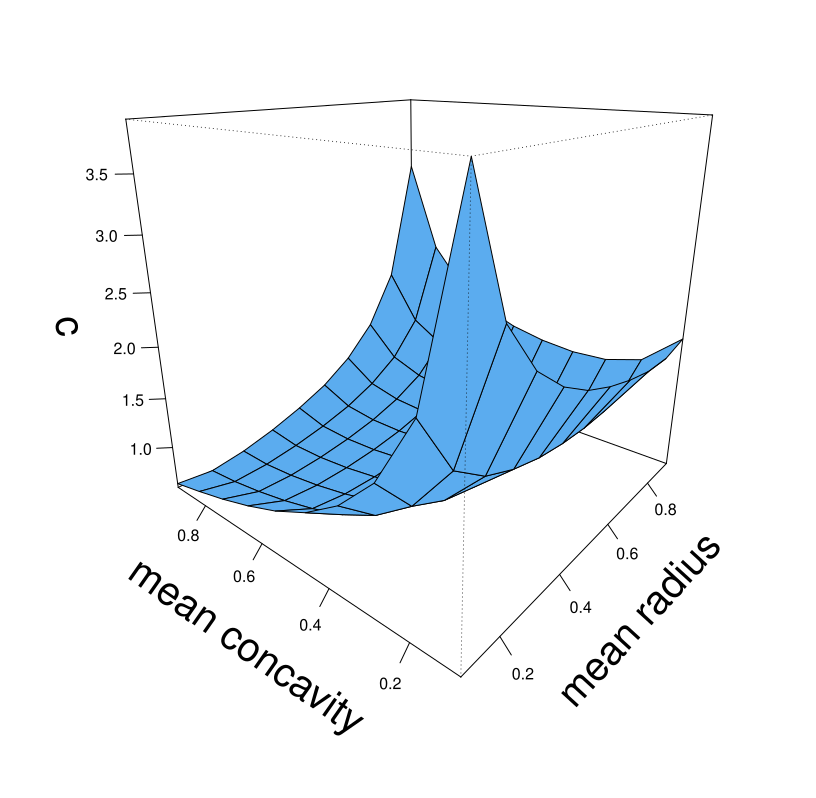}
        \caption{}
        \label{fig2(1)c}
    \end{subfigure}
    \caption{Exploratory visualizations of the WBCDD data and copula density: (a) Contour plot of the copula density combined with standard normal margins; (b) kernel density estimate for transformed sample (with standard normal margins); and (c) surface plot of the copula density.}
    \label{fig2(1)}
\end{figure}

\begin{figure}[ht!]
    \centering
{\includegraphics[width=0.44\textwidth, height=5cm]{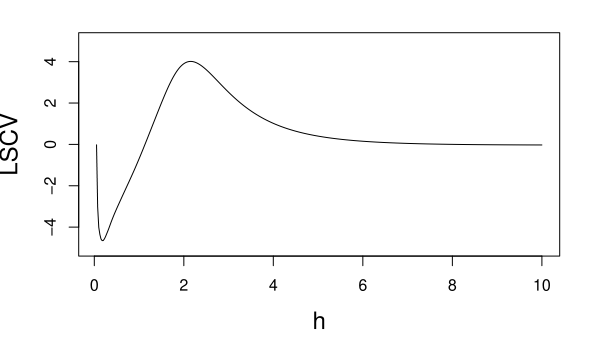}}
   \caption{LSCV curve for the mirror-reflection estimator applied to the WBCDD, showing the optimal smoothing parameter $h_{\text{LSCV}} = 0.031$.}
\label{fig3}
\end{figure}

In Figure~\ref{fig3}, we provided a plot of the LSCV curve for our mirror-reflection estimator for the WBCDD. The optimal LSCV smoothing parameter was found to be $h_{LSCV}=0.031$.

\section{Conclusion}
Methods for copula density estimation have been explored in the literature but remain insufficiently investigated, primarily due to the challenge of unbounded copula densities and their derivatives at the boundaries of \([0,1]^2\).  

In this study, we focused on estimating copula densities that are bounded within the unit square. This assumption facilitated a straightforward derivation of the asymptotic bias using Taylor expansion techniques. We introduced two kernel smoothing methods and demonstrated that the rule-of-thumb approach provides superior bandwidth selection, particularly in cases where the margins have unbounded support.  

We also proved a theorem establishing that the LSCV kernel criterion has an asymptotic expectation equal to the AMISE minus the roughness of the copula density. Simulation results showed that AMISE-optimal bandwidth selection is preferable in scenarios where the margins have unbounded support, even when the kernel used for estimation is bounded. Finally, for any copula to be applied to real-world data, it is essential to ensure that the data's distribution aligns with the theoretical properties of the chosen copula.

\end{document}